\newtheorem{lemma}[theorem]{Lemma}
\renewcommand{\cal}[1]{\mathcal{#1}}
\title{Utility-based Resource Allocation and Pricing for Serverless Computing}
\author{
Vipul Gupta$^*$, Soham Phade$^*$, Thomas Courtade, Kannan Ramchandran 
\blfootnote{Equal contribution. This work was supported by NSF Grants CCF-2007669, CCF-1704967, CCF-0939370, CIF-1703678, CNS-1527846 and  CCF-1618145.}
\\
{\normalsize Department of EECS,  University of California, Berkeley}\\
{\normalsize Email: vipul\_gupta, soham\_phade, courtade, kannanr@eecs.berkeley.edu}
}
\date{\vspace{-5mm}}
\begin{document}
\maketitle

\begin{abstract}

Serverless computing platforms currently rely on basic pricing schemes that are static and do not reflect customer feedback. This leads to significant inefficiencies from a total utility perspective.  As one of the fastest-growing cloud services, serverless computing provides an opportunity to better serve both users and providers through the incorporation of market-based strategies for pricing and resource allocation. With the help of utility functions to model the delay-sensitivity of customers, we propose a novel scheduler to allocate resources for serverless computing. The resulting resource allocation scheme is optimal in the sense that it maximizes the aggregate utility of all users across the system, thus maximizing social welfare. Our approach gives rise to a natural dynamic pricing scheme that is obtained by solving an optimization problem in its dual form. We further develop feedback mechanisms that allow the cloud provider to converge to  optimal resource allocation, even when the users' utilities are private and unknown to the service provider. Simulations show that our approach can track market demand and achieve significantly higher social welfare (or, equivalently, cost savings for customers) compared to existing schemes.
\end{abstract}



\section{Introduction}

In this paper, we provide a 
novel framework
for resource allocation in cloud computing systems.
We focus on the serverless computing setting that has recently garnered significant attention from industry (e.g., Amazon Web Services (AWS) Lambda, Microsoft Azure Functions, Google Cloud Functions) as well as the systems community (see, e.g.,~\cite{ serverless_computing, pywren,numpywren, hellerstein2018serverless, berkeley_view, schleier2021serverless}).
Serverless computing is especially appealing due to its \emph{ease of management} and \emph{elasticity}, where the typical jobs are comprised of several independent functions to be executed. These functions have relatively small system requirements such as execution time and memory storage \cite{berkeley_view, pywren}.

The pricing mechanism (and the corresponding resource allocation scheme employed by the cloud service provider) is one of the most important tools in influencing the usage of cloud resources.
A typical customer's goal is to obtain the highest quality of service (QoS) for a reasonable and affordable price.
Thus, how the service provider allocates resources and charges its customers affects customer behavior, loyalty to the provider, and ultimately its success.
Current popular cloud computing providers--such as AWS, Microsoft Azure, and Google Cloud--employ a pricing scheme 
referred to as ``pay-per-use fixed pricing.''
It scales linearly in the resources utilized, such as time, memory, and the number of jobs, regardless of the nature (e.g., delay-sensitive or not?), and importance (e.g., critical or not?) of the users' applications.

This pricing scheme needs to be revisited, in part due to the following reasons.
\begin{itemize}
    \item {\bf Pricing based on market-demand}: 
    The growing demand implies an increasingly infeasible task for cloud providers to schedule all jobs instantly \cite{ms_azure_2020}.
    An important feature of serverless systems is the elimination of up-front commitment by users. 
    These jobs are typically triggered by external events that are not in control of the cloud service provider (e.g., receipt of a message); see \cite{ms_azure_2020} for examples of different triggers. 
    As a result, the demand for resources can vary significantly over time.  
    Hence, the prices should adapt to this changing demand in real-time. 
    For example, this will ensure that the cloud provider will serve the users who most value their jobs during surge periods, and are therefore willing to pay premium prices.
    \item {\bf Resource allocation and pricing based on delay sensitivity}: In addition to dynamic pricing based on demand, 
it is important to set resource allocation and pricing for jobs based on their \emph{delay-sensitivity} and allocate resources accordingly.
Job completion times form an integral part on service level agreements that governs the quality of service of the users \cite{patel2009service,wilson2011better}.
Users have heterogeneous jobs and have different requirements with respect to their service delay.
For example, \emph{urgent} jobs (that need to be executed in real-time, such as model deployment \cite{serving_models_serverless} and real-time video compression \cite{serverless_video_encoding}) may need prioritization, whereas \emph{enduring} jobs (that can be put into queues with reasonable wait-times, such as optimization for machine learning \cite{training_DNN_serverless,osn,cirrus,serverless_ml} and scientific computing \cite{serverless_faaster_better_cheaper, numpywren,oversketch}) could be put on hold.  It is therefore desirable for 
the pricing scheme to provide appropriate incentives to users. 
For example,  premium rates could be applied to urgent jobs, and discounts applied to  enduring jobs.
\end{itemize}



To address these issues, 
we develop a \emph{dynamic multi-tier pricing scheme} that incentivizes users to bid optimally for resources that are tailored to their requirements and delay-sensitivity characteristics. 
To articulate the notion of demand and delay-sensitivity, we  adopt the concept of utility functions from economics.
We consider user utilities as a function of delays in job completion times.
This enables us to naturally differentiate jobs based on their delay-sensitivity characteristics and allocate the resources optimally.
Some examples of such utility functions are shown in Fig. \ref{fig:utility_examples}. 
%
Utility, being an abstract concept, it is common to consider the relatively more tangible notion of \emph{willingness to pay} in lieu of the utility function.
For example, if a user is willing to pay a maximum of $\$1$ for her job to get completed with a delay of 1 minute, then we will say that her utility function takes value $\$1$ at $t = 1$ minute. 
Thus, in Fig. \ref{fig:utility_examples}, the y-axis can also be thought of as the amount the users are willing to pay w.r.t. job completion times.

\begin{figure}[t]
  \centering
  \includegraphics[width=0.9\linewidth]{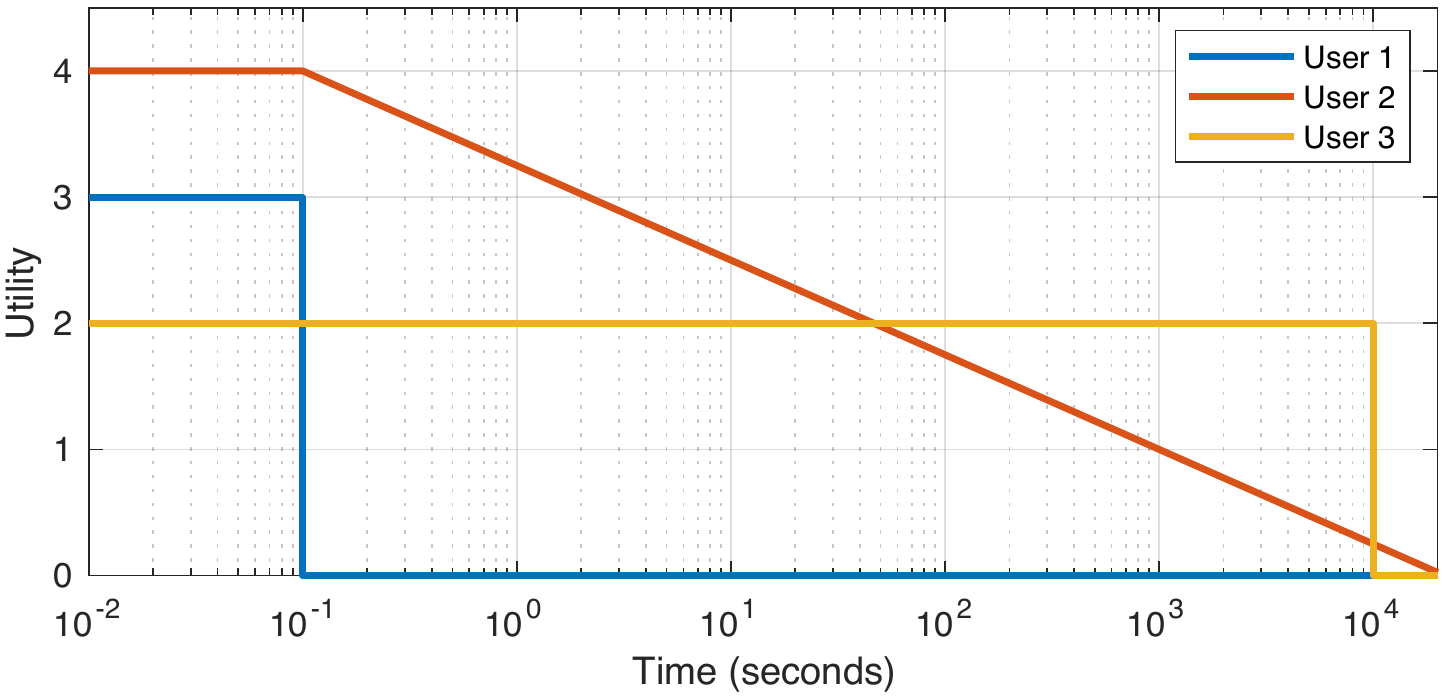}
  \label{fig:naive1}
\caption{Examples of utility functions for 3 users that depend on completion time. User 1 obtains utility only when her job gets completed under 0.1 seconds (e.g., machine learning inference). User 2 obtains diminishing returns as time passes (e.g., market-data analysis for algorithmic trading). User 3, on the other hand, does not care as long as her job is completed within 10000 seconds (e.g., large-scale machine learning).  }
\label{fig:utility_examples}
\end{figure}

\begin{figure*}[t]
 \centering
 \includegraphics[width=0.7\textwidth]{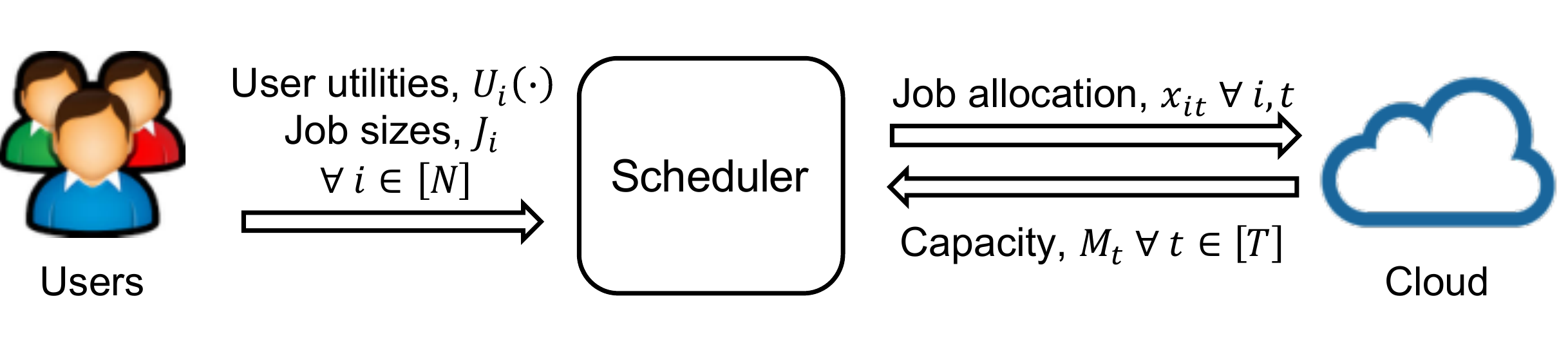}
 \caption{A block diagram schematic of the scheduler. It takes as its input the job size, $J_i$, and the utility function, $U_i(\cdot)$, for all users $i =1,\cdots, N$, the capacity constraints or machine availability information from the cloud, and outputs a job allocation schedule (notation is described in Sec. \ref{sec: prob_form}).
Here, the job sizes capture the demand for resources, the utility functions capture the delay-sensitivity of the users, and the capacity constraints capture the supply of resources.
}
 \label{fig:sys_prob}
 \end{figure*}

As a notion of optimal allocation, we set for ourselves the goal of scheduling jobs so that the total utility gained by the users in the system is maximized.
Social welfare as a concept is very important to many companies that prioritize customer satisfaction\footnote{For example, see https://www.amazon.jobs/en/principles}.
We formulate this as an optimization problem that maximizes the social welfare (i.e., the sum of the utilities received by all the users).
The resulting scheduling problem is inherently a dynamic optimization problem.
To incorporate utilities into this problem, we first identify a static version of the scheduler problem that can be run periodically.
We elaborate on this in Sec.~\ref{sec: prob_form}.
A prototype for our static scheduler is shown in Figure~\ref{fig:sys_prob}.


\begin{figure*}[ht]
 \centering
 \includegraphics[width=0.7\textwidth]{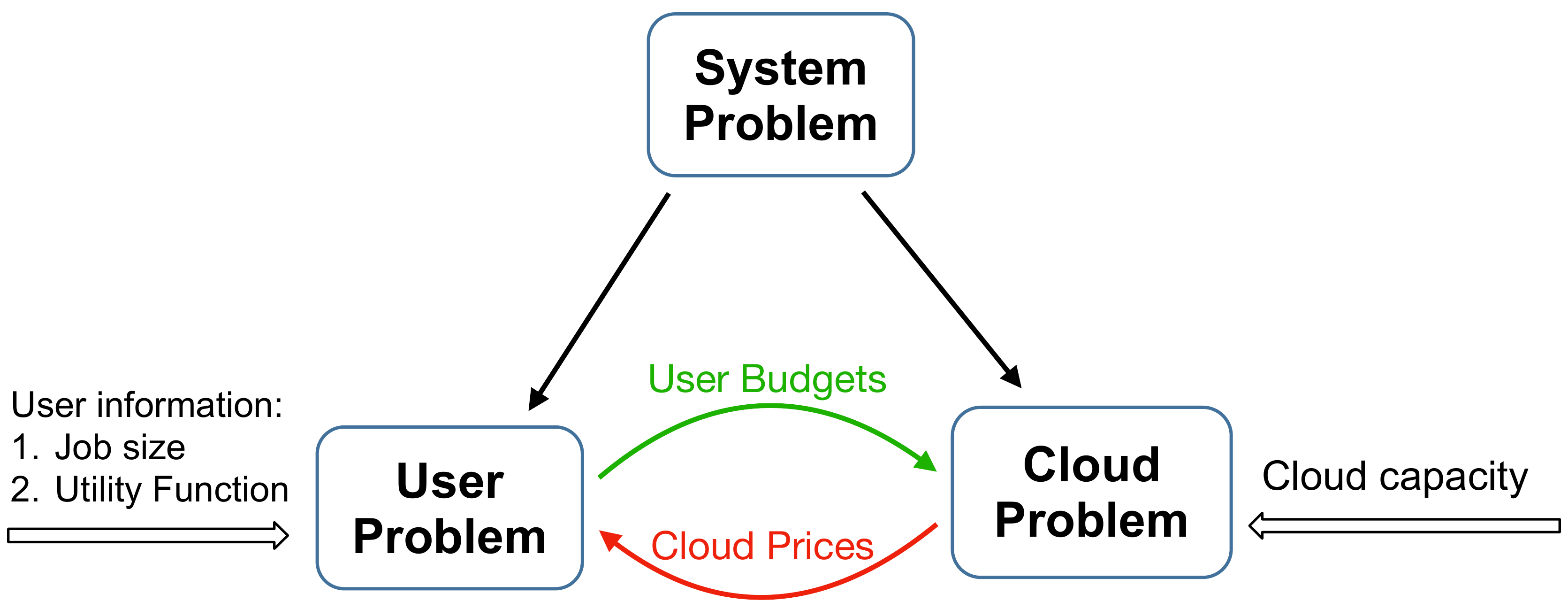}
 \caption{Decomposing the system problem into a user and cloud problems. The users and cloud interact iteratively to solve the system problem that maximizes net utility, thus, obtaining optimal pricing and resource allocation scheme. 
}
 \label{fig:sys_prob_decomp}
 \end{figure*}

It is not enough to solve this optimization problem to be of practical use because
the utility (or willingness-to-pay) of a user is unknown to the service provider.
We cannot simply assume that the users would truthfully report their utility functions to the service provider.
Even if the users are willing to do so, the high dimensional utility functions that could change with time could be hard to communicate in real-time.
In fact, oftentimes, the users themselves are unaware of their utility functions. 
Even if we consider a low dimensional family of typical utility functions, which could potentially resolve the problem of communication, there exist privacy concerns under which users are reluctant to report their utility functions.
Thus, it is impractical to assume, in general, that the cloud service provider has access to the users' utility function.
We need to find an indirect way based on market mechanisms to resolve this problem.

A similar problem was faced by the network community in allocating bandwidth over the Internet using TCP \cite{mackie1995pricing, mo2000fair, low1999optimization, kunniyur2003end}.
Inspired by the seminal works in \cite{kelly1997charging, kelly1998rate}, we wish to decompose the scheduler optimization problem into user problems---one for each user---and a cloud problem.
Each user problem interacts with the cloud problem through the prices published by the cloud provider and the corresponding budget responses from the users (see Fig. \ref{fig:sys_prob_decomp} for an illustration).
A user's budget response is a function of her utility function and the prices published by the cloud.
This inserts a filter between the utility functions of the users and their reports to the network providing them a layer of privacy and making the communication between the cloud and the users feasible.%
\footnote{Although an interesting feature of our algorithm, we do not pursue the privacy aspect in this paper and leave it for future work.}
The above decomposition allows us to track the optimal performance with limited feedback from users in the form of their budgets (which is a response to the currently advertised prices).

Our problem can be viewed as the scheduling analog of network bandwidth allocation problem.
However, the scheduling optimization problem by itself is not amenable for this decomposition. 
The underlying reason for this can roughly be stated as follows: Under TCP, the users obtain utility based on their allocated bandwidth. 
The network imposes restrictions on the total bandwidth that can be allocated over each of the links in the network.
In our case, utility is awarded to users when they complete their job. 
The completion time in turn depends on the number of cloud computing resources allocated to this user and when they are assigned.
Thus there is a non-trivial relation between the allocated resources and the outcomes that the users care about (i.e. the domain of their utilities a function).
This makes the problem significantly more complicated.
In fact, we note that the original problem is NP-hard.
We provide a relaxation that is tractable.
This relaxation can be viewed as a vector version of the network optimization problem as proposed in \cite{kelly1997charging}.
We then identify a sparsity property in the solution to this problem and leverage it to show that the obtained solution is a good approximation to the original intractable problem.
We provide a provable bound on the approximation gap.

To the best of our knowledge, this is the first paper to extend the network resource allocation problem to the scheduling setting in a decentralized way.
The ideas surrounding relaxation and the use of sparsity in the resulting problem could help in many avenues beyond serverless computing.

\begin{figure*}[ht]
 \centering
 \includegraphics[width=0.7\textwidth]{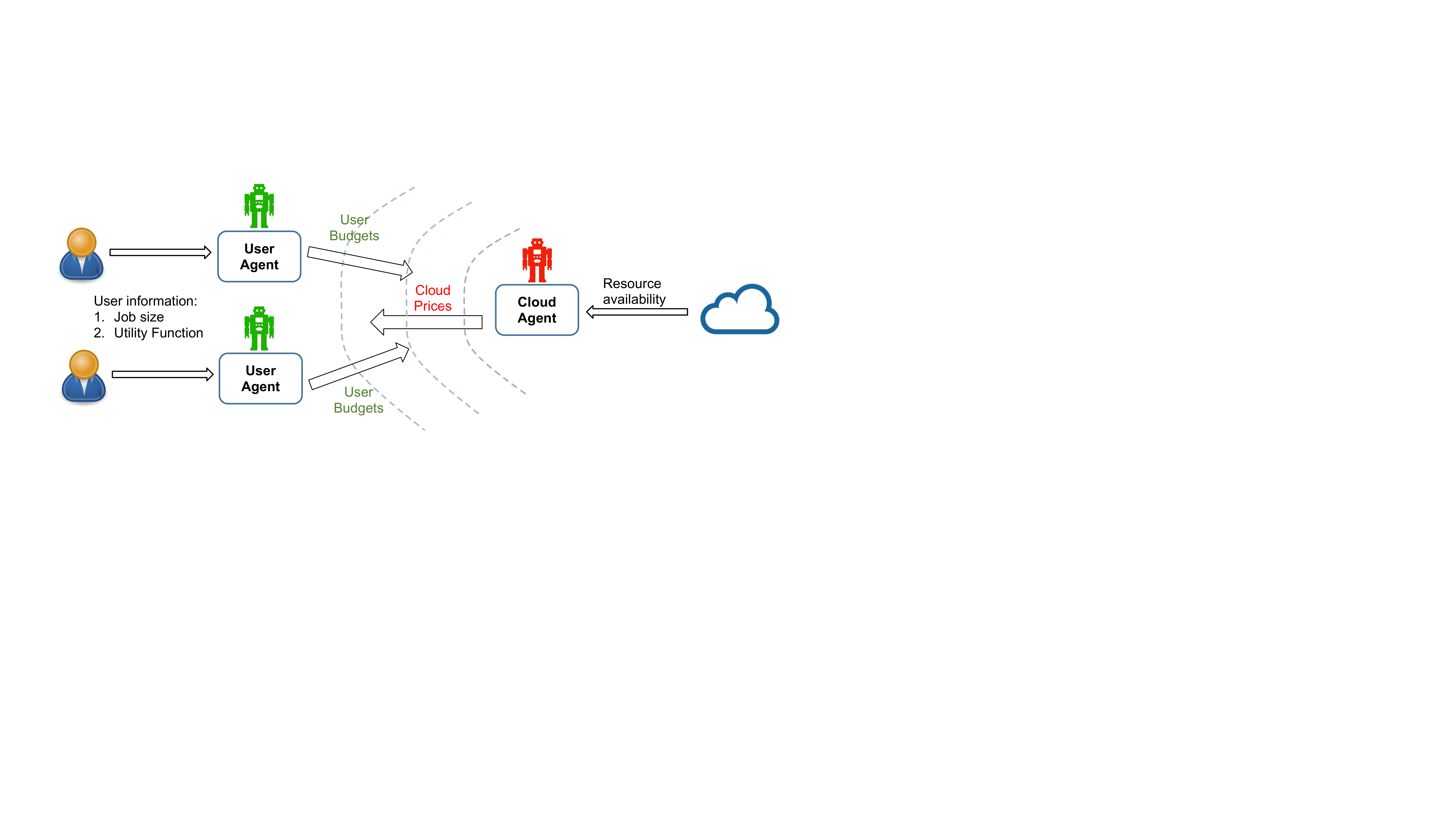}
 \caption{User agents (on behalf of each user) and the cloud agent (on behalf of the cloud) interact in a purely decentralized fashion to allocate resources and decide prices.
}
 \label{fig:decentralized_resource_allocation}
 \end{figure*}

{\bf Proposed decentralized resource allocation and pricing}: 
We envision user agents working on behalf of each user and a cloud agent working on behalf of the cloud. For example, these agents can be robots running a fixed algorithm.
Locally, on the user's end, we let each user choose her utility (or willingness to pay) from a list of options that best suits her needs.
Using this information, the user agent computes the optimal budget response to the prices published by the user and transfers only the budget signals to the cloud keeping the utility information encrypted locally.
Besides, an attractive feature of such a scheme is that the users can update their utility function at their disposal and the user agent will use the most recent choice.
An illustration is provided in Fig. \ref{fig:decentralized_resource_allocation}.
Our algorithm can be interpreted as a protocol between the users and the cloud provider for optimal resource allocation and fair pricing (further explained in Sec. \ref{sec:pricing_mechanism}).

\textbf{Incentivizing truthful reporting}: 
~Such market-based mechanisms need to be incentive-compatible, that is, we need to ensure that the pricing scheme is such that the users cannot gain by faking their utility or willingness to pay.
%
We note along these lines that in our proposed pricing scheme the users are not necessarily charged by the full amount of their willingness to pay. Instead, the amount charged depends on the net demand and the willingness to pay of all the users with the additional feature that no user under any circumstances is charged more than her willingness to pay. Hence, if a user has a high-willingness-to-pay, she should still report it truthfully since the actual charges might be significantly smaller if the demand is less.
Further, if we assume that no user is large enough to single-handedly affect the prices, which is a general assumption in the market-based settings, then it will be evident that the users should respond according to their true utility functions to maximize their personal gain (see USER(i) problem in Sec.~\ref{sec:pricing_mechanism}).

To summarize our key contributions:
\begin{enumerate}
    \item We formulate our scheduling problem as an optimization problem to maximize the net social welfare, i.e. the sum of the utilities of all the users, where  the utility functions incorporate the delay-sensitivity of the users.
    We relax the optimization problem so that is amenable to user-cloud decomposition. (Sec.~\ref{sec: prob_form}.)
    \item We identify the sparsity structure in the solution and use it to show that the relaxed solution is close to the original solution.
    In particular, we provide an upper bound on the gap between the original optimization problem and its relaxation. (Sec.~\ref{sec:analyse}.) 
    \item We interpret the intermediate Lagrange variables in the user-cloud decomposition as pricing and bidding signals.
    Based on this, we propose an algorithm to allocate resources in the subsequent runs of the scheduler and give 
    a gradient-based pricing scheme to track the equilibrium prices. (Sec.~\ref{sec:pricing_mechanism}.)
    \item We demonstrate the viability of our approach through simulations. We give a market simulation where the users' utilities change every day. 
    We compare our allocation scheme with existing schemes (such as first-come-first-serve) 
    that do not account for users' utilities.
    We observe that the sum utility of the system obtained by using our allocation scheme almost doubles the sum utility obtained otherwise. Further, we show that our utility-agnostic approach successfully tracks the variations in users' utilities across days. (Sec.~\ref{sec:sim}.)
\end{enumerate}

We conclude our paper in section~\ref{sec: conclusion} with a few exciting open problems of interest to the research community like machine heterogeneity and job dependencies, and other important considerations for pricing like wholesale discount and user risk-preferences and fault-tolerance.
We give a brief survey of related works in Appendix~\ref{sec: related work}.
To maintain continuity in reading, we defer the
additional remarks and proofs to
Appendices~\ref{sec: remarks} and \ref{sec: proofs}, respectively.

\section{Problem Formulation}
\label{sec: prob_form}

Serverless  (also called Function-as-a-service) as a cloud computing framework is ideal for ``simple'' jobs 
where each \emph{user} submits a \emph{function} to be executed on serverless workers. 
Each user can request for a job comprising of any number of executions of her function at any time, which could be triggered due to external events.
For example, the users can provide the conditions under which they require the execution of a certain number of instances of their function.
The users provide these \emph{trigger event} details along with their function submissions.
Our goal here is to design an efficient real-time job scheduler to allocate resources to the jobs that have been triggered across multiple users and a corresponding pricing scheme for the cloud provider.

We envision a job scheduler that operates periodically and schedules the jobs that are currently in its queue.
This queue consists of all the jobs that have been triggered and are ready to be executed but have not been scheduled yet.
Thus, it consists of previously unscheduled jobs (complete or partial) plus any new jobs that arrived since the previous run of the scheduler.

Our model assumes that the user derives utility only when her job is completed, that is, when all the functions comprising her job are executed where each function execution requires one serverless worker\footnote{Without loss of generality, we make a simplifying assumption that all users' functions are identical in terms of computation costs. Thus, the pricing scheme in a given tier charges users only on the basis of their job sizes.}.
Let the respective delay-sensitivity for user $i$ be captured by a utility function $U_i: [0, \infty) \to \bbR$. 
That is, user $i$ obtains utility $U_i(\tau)$ if her job is completed at time instant $\tau (> 0)$, where $U_i(\cdot)$ is non-increasing.
Let $J_i$ denote the number of function executions needed to complete the job of user $i$.
We call this the size of user $i$'s job.
For example, it could be a single function instance in which case $J_i = 1$ or a batch job of size $J_i > 1$.

We think of the scheduler as allocating resources to the users in different \emph{service tiers} based on their execution times---some jobs will be scheduled for immediate execution, whereas others will be scheduled for execution at later times in the future.
The jobs that are not scheduled will remain in the queue to be scheduled by the scheduler at later operations.
Note that the jobs that have been scheduled for execution at a future point are removed from the queue.
Let $T$ be the maximum number of service tiers offered by the cloud provider, and let the $t$-th tier be characterized by the end time of this tier given by $\tau_t$.
For example, if $T = 5$, then the five tiers can correspond to job completion under 1 second, 10 seconds, 10 minutes, 1 hour and 10 hours, respectively. 
Notice that the end times of the different tiers in our model need not be evenly spaced.
This is a useful feature because it allows our scheduler to plan over longer time horizons with a limited number of tiers.
The pricing for these different tiers of service should ideally decrease as the job completion time increases.

We refer to the intervening time between the adjacent tier end times as \emph{service intervals}.
Continuing the previous example, service interval 1 starts at 0 seconds and ends at 1 second, service interval 2 starts at 1 second and ends at 10 seconds, etc.
Note that the functions in a single user's job can be served across several service intervals, but has a unique tier when all these functions are completed.
Further, let $M_t$ be the constraint on the number of machines available for the scheduler to allocate resources in service interval $t$. 
For example, if 500 machines are available every 100ms and tier 1 is one second, then $M_1= 500*(1/0.1)= 5000$. 
Similarly, if tier 2 is 10 seconds, then the number of machines available for tier 2 jobs are $500*(10/0.1) - 5000 = 45,000$.
However, we would like to retain some portion of this for the subsequent operations of our scheduler.
This will allow the scheduler to allocate resources for urgent jobs arriving in the future.
Say we decide to utilize at most $20 \%$ of these machines, then $M_2 = 45,000*(0.2) = 9000$.

\begin{remark}
The values of $M_t$'s can depend on several constraints that are subjective to the cloud provider.
In general, $M_t$ can be calculated on the basis of 1) machine occupancy by pre-scheduled jobs, 2)
 the forecast of the number of urgent jobs in subsequent time, which can be modeled on the basis of historical data available at the cloud provider, and 3) a margin to account for system uncertainties such a power shutdown, machine failures, etc. 
\end{remark}

{\bf System Problem}:
At each implementation of the scheduler, we assume that the scheduler has access to the unscheduled jobs ready to be executed (say job of size $J_i$ for user $i$), their utility functions adjusted for past delay (for example, if a job has been waiting in queue for time $\tau_0$, then we take its utility function to be $U_i(\tau + \tau_0)$), and the machine availability $M_t$ for each service interval $t$.
The scheduler then outputs a feasible allocation of serverless workers across different service intervals.
Let $x_{i,t}$ denote the number of functions 
executed
for agent $i \in [N] := \{1, 2, \dots, N\}$ at service interval $t \in [T] := \{1, 2, \dots, T\}$.
Let $\x$ be the $(N\times T)$-matrix with entries $x_{i,t}$, $i \in [N], t \in [T]$.

Let $U_{i,t} := U_i(\tau_t)$ denote user $i$'s utility if her job is completed in tier $t$, where $\tau_t$ is the end time of tier $t$.
Then 
\begin{equation}
\label{eq: end_time_def}
	T_i := \min \{t \in [T] : \sum_{s = 1}^t x_{i,s} \geq J_i \},
\end{equation}
 is the time it takes to complete user $i$'s $J_i$ functions, awarding her a utility of $U_{i,T_i}$. 
(If a user's job is not completed we let $T_i = T + 1$ and assign her zero utility.)
Also, at any service interval $t\in [T]$, since the number of function executions cannot exceed the cloud service provider's capacity, we have  $\sum_{i = 1}^N x_{i,t} \leq M_t$.
To this end, we formulate the system problem SYS that maximizes the sum utility of the system as follows (see Fig. \ref{fig:sys_prob} for an illustration):
{
\allowdisplaybreaks
\begin{align}
\text{\bf\underline{SYS}} \nn \\
\maxi_{x_{i,t}\geq 0} ~~~~~& \sum_{i=1}^n U_{i,T_i}\nn \\
\text{subject to} ~~~~~& \sum_{i = 1}^N x_{i,t} \leq M_t, \forall~ t \in [T], \text{ and }\label{eq: sys_capacity}\\
 & T_i = \min \{t \in [T] : \sum_{s = 1}^t x_{i,s} \geq J_i \}, ~\forall~ i \in [N]. \label{eq: sys_T_def}
\end{align}
}

In general, SYS can have multiple solutions, but there always exists a solution that 
assumes a special form. 
Specifically, it corresponds to a \emph{non-preemptive scheduling}, where the resources are allocated so that none of the users' jobs are interrupted in the middle of the execution and made to wait till its execution resumes at a later stage after serving other users.
We first outline a rough sketch of the underlying intuition and then establish this result in Lemma~\ref{lemma:ordering_sys}.
(See \cite{mcnaughton1959scheduling} for a similar result in the context of scheduling jobs on a single machine.)

Since the utilities of users depend only on their job completion times, it is suboptimal to leave a user's job unfinished by allocating partial resources (which provides no utility gain). 
For example, say user $i$ has been allocated a total of $\sum_{s=1}^t x_{i,s} (<J_i)$ machines till the end of service interval $t$, for some $1 \leq t < T$, and then interrupted, i.e. $x_{i,(t+1)} = 0$.
Further, let there be
a user $j (\neq i)$ that is allocated resources in the service interval $t+1$.
If user $i$'s job completes before user $j$, then we can swap the function executions of user $i$ at times $> t+1$ with the foremost function executions of user $j$ without reducing the sum utility, since such a swap will only possibly reduce the completion time of user $i$ without affecting that of user $j$.
On the other hand, if user $j$'s job completes before user $i$, then we can swap the function executions of user $i$ at times $\leq t$ with the hindmost function executions of user $j$ without reducing the sum utility, since such a swap will only reduce the completion time of user $j$ without affecting that of user $i$.
We thus have the following lemma that we prove formally in Appendix~\ref{sec: proofs} (see also Remark~\ref{rem: greedy_sparse}).

\begin{lemma}[Non-preemptive scheduling]
\label{lemma:ordering_sys}
There exists an optimal solution to SYS that allocates resources to users in an uninterrupted fashion, that is, if user $i$ gets allocated some resources, the system would allocate $J_i$ resources to user $i$ (possibly across multiple service intervals) instead of halting it and attending other users.
\end{lemma}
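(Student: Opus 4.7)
The plan is to sidestep the local exchange argument sketched in the paper and instead construct a canonical non-preemptive optimal schedule by greedy reassignment. Let $\mathbf{x}^{*}$ be any optimal solution to SYS with resulting completion times $T_1^{*}, \ldots, T_N^{*}$. First I would simplify $\mathbf{x}^{*}$: any user whose job is not completed ($T_i^{*}=T+1$) contributes zero utility, so her allocation can be zeroed out; for every completed user, any excess over $J_i$ or any allocation at times $s > T_i^{*}$ can be deleted without changing her completion time, only freeing capacity. After this clean-up, $\sum_{s=1}^{T_i^{*}} x_{i,s}^{*}=J_i$ and $x_{i,s}^{*}=0$ for $s>T_i^{*}$ on every served user, and all capacity constraints still hold.

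Next, relabel the served users so that $T_1^{*} \leq T_2^{*} \leq \cdots \leq T_m^{*}$, and write $c_t := \sum_{s=1}^{t} M_s$ and $p_k := \sum_{i=1}^{k} J_i$. The key inequality comes from summing the capacity constraints up to time $T_k^{*}$: since every user $i \leq k$ has all her work inside $[1,T_k^{*}]$, we obtain $p_k \leq c_{T_k^{*}}$. I would then construct $\hat{\mathbf{x}}$ by a greedy reassignment that scans service intervals in time order and dedicates the first $J_1$ machine-slots to user $1$, the next $J_2$ to user $2$, and so on. Concretely, user $k$ occupies the slot indices $(p_{k-1},p_k]$ which are placed into successive service intervals respecting the per-interval capacities $M_t$. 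By the inequality $p_k \leq c_{T_k^{*}}$, user $k$'s new completion time satisfies $\hat{T}_k \leq T_k^{*}$.

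Two things then need verification. First, feasibility: by construction, no service interval is filled beyond $M_t$ slots, so the SYS capacity constraints in \eqref{eq: sys_capacity} hold. Second, optimality: since each $U_i$ is non-increasing and $\hat{T}_k \leq T_k^{*}$, the utility of user $k$ does not decrease, so the sum utility of $\hat{\mathbf{x}}$ is at least that of $\mathbf{x}^{*}$; by optimality of $\mathbf{x}^{*}$ it is equal, hence $\hat{\mathbf{x}}$ is itself optimal. Non-preemption is then immediate from the construction: the set $\{t : \hat{x}_{k,t} > 0\}$ is exactly the contiguous block of service intervals containing the slot indices $(p_{k-1},p_k]$, and each user is either fully served (with exactly $J_i$ units) or not served at all, which is precisely the uninterrupted property in the lemma.

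The main subtlety I anticipate is the bookkeeping at service-interval boundaries: user $k$'s contiguous block may straddle two intervals, so her per-interval allocation at the endpoints is a partial fill of $M_t$ shared with user $k-1$ or $k+1$. This is a routine but slightly finicky calculation to write out—one must check that the leftover capacity $M_t - (c_t - p_{k-1})$ at the boundary interval is exactly what user $k-1$ consumes there—so I would pick indices carefully and rely on the definitions $c_t$ and $p_k$ to make the accounting automatic. No deeper difficulty is expected beyond this.
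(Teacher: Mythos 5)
Your proof is correct and is essentially the same as the paper's own proof in Appendix~C: the paper likewise orders the served users by completion time, repacks them greedily in that order, and uses the inequality $\sum_{j \le k} J_j \le \sum_{s=1}^{T_k^*} M_s$ (your $p_k \le c_{T_k^*}$) to conclude that no completion time increases. The ``local exchange argument'' you set out to sidestep is only the informal intuition in the main text, not the actual proof, so your route and the paper's coincide.
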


Let $u_{i,t} := U_{i,t} - U_{i,{t+1}},~\forall i,t$, where we assume that $U_{i,{T+1}} := 0, \forall i,$
consistent with the fact that we assign zero utility if the jobs are not completed.
Thus, $u_{i,t} \geq 0, \forall i,t,$ as $U_i(\cdot)$'s are monotonically decreasing.
Condition~\eqref{eq: sys_T_def} in problem SYS is not favorable from an optimization framework point of view.
Hence, we introduce an indicator variable $y_{i,t}, \forall i,t$, that we use as a proxy to indicate whether user $i$'s job is completed on or before time $t$. 
Let $\y := (y_{i,t})_{i \in [N], t \in [T]}$.
With these variables, the scheduler optimization can be formulated as follows:
{
\allowdisplaybreaks
\begin{align}
\text{\bf {\underline {SYS-ILP}}}\nn \\
 \maxi_{x_{i,t}\geq 0, y_{i,t} \in \{0,1\} }
 ~~~~~&\sum_{t=1}^T\sum_{i=1}^n u_{i,t}y_{i,t} \nn \\
 \text{subject to} ~~~~~& y_{i,t} \leq \frac{\sum_{s = 1}^t x_{i,s}}{J_i}, \forall i \in [N], t \in [T], \text{ and }\label{sys_ilp_y}\\
& \sum_{i = 1}^N x_{i,t} \leq M_t, ~\forall ~t \in [T]. \label{sys_ilp_x}
\end{align}
}

If $\x$ is a feasible solution for SYS, then defining $y_{i,t}, \forall i,t$, to be equal to $1$ if user $i$'s job is completed by time $t$ and equal to zero otherwise, we observe that $\x, \y$ is a feasible solution to SYS-ILP with the same objective value as that of SYS with $\x$.
On the other hand, if $\x, \y$ is a feasible solution for SYS-ILP, then defining $T_i = \min\{t\in [T] : y_{i,t}>0\}$ for all $i\in[N]$, we get that it forms a feasible solution for SYS with the same objective value.
This gives us an equivalence between the problems SYS and SYS-ILP.
Problem SYS-ILP can be solved using Mixed Integer Linear Programming (MILP) methods available in computing frameworks such as MATLAB \cite{scip_milp} for sufficiently small $N$ and $T$.

{\bf Relaxing SYS-ILP}:
Although SYS/SYS-ILP are NP-hard in general (see Remark~\ref{rem: NP_hardness}), they can be solved approximately by relaxing the integer constraints.
%
%
%
Let us 
replace the constraint $y_{i,t} \in \{0,1\}$ by $0 \leq y_{i,t} \leq 1$. 
%
Since $u_{i,t} \geq 0$, the optimal $y_{i,t}^*$ for the relaxed problem is attained when the constraint \eqref{sys_ilp_y} is satisfied with equality for all $i, t$ as long as $y_{i,t} \leq 1$.
Hence, we can substitute $y_{i,t} = \textstyle\sum_{s = 1}^t x_{i,s}/J_i$ in the objective function and introduce an additional constraint, $\textstyle\sum_{t = 1}^T x_{i,t} \leq J_i$, for all i, to ensure that $y_{i,t} \leq 1$.
Fractional $y_{i,t}$ represents the fraction of functions completed for user $i$ at time $t$, 
and the system is awarded a utility per function depending on the tiers in which these functions are executed unlike depending on the completion time of all functions considered earlier.
Letting 
\[
	F_{i,t} := \frac{U_{i,t}}{J_i}
\] 
denote the utility per unit function, for all $i$ and $t$, and making appropriate substitutions in SYS-ILP, we get
\begin{align}
\text{\bf \underline{SYS-LP}}\nn\\
\maxi_{x_{i,t}\geq 0} ~~~~~&\sum_{i=1}^N \sum_{t=1}^T x_{i,t}F_{i,t}\nn \\
\text{subject to} ~~~~~&\sum_{t=1}^T x_{i,t} \leq J_i, ~\forall ~i \in [N], \label{tot_job_cons}\\
&\sum_{i = 1}^N x_{i,t} \leq M_t, ~\forall~ t \in [T]. \label{tot_resource_cons}
\end{align}
The nice LP form of SYS-LP would later allow us to devise pricing mechanisms for resource allocation, similar to the network resource allocation schemes in \cite{kelly1997charging, kelly1998rate}.  
In particular, the dual variable $\mu_t$ corresponding to the constraint \eqref{tot_resource_cons} plays the role of an auxiliary price (or a shadow price) for the $t$-th service tier (more on this in Sec. \ref{sec:pricing_mechanism}).

Let $V^*$ be the optimum sum utility of the system in SYS/SYS-ILP.
Let $V^R$ denote the optimum value for the relaxed problem SYS-LP (which is obtained by relaxing the integer constraints on $y_{i,t}$).
It is clear that $V^R \geq V^*$, since $V^R$ is obtained by relaxing the integer constraints.
Let $\x^R$ be the corresponding resource allocation matrix that achieves the optimum value $V^R$ in SYS-LP.
Let $\y^R$ be the corresponding matrix given by
\begin{equation}
\label{eq: y_R_def}
	y^R_{it} := \sum_{s = 1}^t \frac{x_{i,s}}{J_i}.
\end{equation}
We define the matrix $\hat \y \in \bbR^{N \times T}$ with entries 
\begin{equation}\label{y_hat}
	\hat y_{i,t} := \begin{cases}
		0, \text{ if } y^R_{i,t} < 1,\\
		1, \text{ if } y^R_{i,t} = 1,
	\end{cases} \forall i, t.
\end{equation}
We can verify that $\x^R, \hat \y$ is a feasible solution for SYS-ILP. Let $\hat V$ denote the value of the objective in SYS-ILP at $\x^R, \hat \y$.
Then, since $\hat V$ is one possible solution to SYS-ILP (whose optimal solution is $V^*$), we have 
\begin{equation}\label{relation_objs}
V^R \geq V^* \geq \hat V.
\end{equation}
Later, we show that the inequality gap $(V^R - \hat V)$ is small. Note that this implies that the gaps $(V^* - V^R)$ and $(V^* - \hat V)$ are small. 
To that end, we first analyze the problem SYS-LP in more detail.
(See Remark~\ref{rem: compare_wang} for an alternative way to relax the problem SYS as proposed by \cite{wang2012datacenter} and how it compares with our approach.)

\section{Analyzing SYS-LP}
\label{sec:analyse}
Consider the Lagrangian corresponding to the optimization problem SYS-LP,
{\small
\begin{align}
L(\x, \bl, \bmu) &= \sum_{i=1}^N \sum_{t=1}^T x_{i,t}F_{i,t} + \lambda_i\left(J_i - \sum_{i=1}^N x_{i,t}\right) \\ 
&+ \mu_t\left(M_t - \sum_{t=1}^T x_{i,t}\right)\nn \\
&= \sum_{i=1}^N \sum_{t=1}^T \left(F_{i,t} - \mu_t - \lambda_i\right)x_{i,t} + \sum_{t=1}^T\mu_tM_t + \sum_{i=1}^N\lambda_i J_i, \label{lag:system}
\end{align}
}\\
where $\x = [x_{i,t}], i\in[N], t \in [T],$ is the primal variable matrix and $\bl = [\lambda_i], i\in [N]$, and $\bmu = [\mu_t], t\in [T]$, are the dual variable vectors corresponding to the  constraints in \eqref{tot_job_cons} and \eqref{tot_resource_cons}, respectively. 
Here, $\mu$ is the auxiliary price vector for the $T$ service tiers that comes out of the resource allocation problem. 
Thus,
{
\begin{align*}
\frac{\partial L}{\partial x_{i,t}} &=  F_{i,t} - \mu_t - \lambda_i, ~~\forall i, t;\\
\frac{\partial L}{\partial \lambda_i} &= J_i - \sum_{i=1}^N x_{i,t},~~~~~~~ \forall i;\\
\frac{\partial L}{\partial \mu_t} &= M - \sum_{t=1}^T x_{i,t}, ~~~~~\forall t.
\end{align*}
}\\
Hence, the Karush-Kuhn-Tucker (KKT) conditions \cite{boyd} imply that
\begin{align}
\label{sys_kkt1} 
\mu_t + \lambda_i 
&\begin{cases}
= F_{i,t}, & \text{ if } x_{i,t} > 0,\\
\geq F_{i,t}, & \text{ if } x_{i,t} = 0,
\end{cases}
\quad \quad \forall i, t,\\
\label{sys_kkt2} 
\sum_{i=1}^{N} x_{i,t} 
&\begin{cases}
= M_t, & \text{ if } \mu_{t} > 0,\\
\leq M_t, & \text{ if } \mu_{t} = 0,
\end{cases}
\quad \quad ~\forall t,\\
\label{sys_kkt3}
\sum_{t=1}^{T} x_{i,t} 
&\begin{cases}
= J_i, & \text{ if } \lambda_{i} > 0,\\
\leq J_i, & \text{ if } \lambda_{i} = 0,
\end{cases}
\quad \quad ~~~\forall i.
\end{align}

We will now show that there exists an optimal solution to the SYS-LP problem with a special structure.
 Note that when $x_{i,t} > 0$, we have $\mu_t + \lambda_i = F_{i,t}$ from the KKT condition in Eq. \eqref{sys_kkt1}. 
We know that every time a user $i$ gets non-zero resources allocated in multiple tiers, say $t_1$ and $t_2$, we have $\mu_{t_1} + \lambda_i = F_{i,{t_1}}$ and $\mu_{t_2} + \lambda_i = F_{i,{t_2}}$. 
This implies that 
\begin{equation}
\label{eq: mu_relation}
	\mu_{t_1} - \mu_{t_2} = F_{i,{t_1}} - F_{i,{t_2}}.
\end{equation}
Hence, every time any user gets partial resources (that is, less than the total job size) allocated in one tier, we get one relation of the form \eqref{eq: mu_relation}.
If there are more than $T-1$ instances of such partial resource allocation, then we can eliminate the variables $\mu_t, t \in [T]$, to get a non-trivial relation amongst the variables $F_{i,t}$'s.
In a generic case, where the $F_{i,t}$'s are any real numbers, it is unlikely that such a non-trivial relation amongst the variables $F_{i,t}$'s is satisfied.
Expanding on this idea, we prove the following lemma whose proof is included in Appendix~\ref{sec: proofs}. (See \cite{wang2011cost} for a related result in the context of network resource allocation. See also Remark~\ref{rem: optimal_soln_sparsity}.)

\begin{lemma}
\label{lem: sparsity}
	There exists an optimal solution $x_{i,t}$ to SYS-LP such that $|\{ (i,t) : 0 < x_{i,t} < J_i \}| \leq T$.
\end{lemma}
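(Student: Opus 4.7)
The plan is to combine an LP extreme-point argument with a cycle-elimination on the bipartite fractional-support graph, following the informal sketch given before the lemma. I would start with an optimal basic feasible solution $\x^*$ of SYS-LP (which exists by standard LP theory) together with optimal duals $(\bl, \bmu)$ from the KKT conditions \eqref{sys_kkt1}--\eqref{sys_kkt3}, and build the bipartite graph $G$ on user- and tier-nodes whose edges are the fractional entries $(i,t)$ with $0 < x^*_{i,t} < J_i$.

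The first step reduces $G$ to a forest. If $G$ contains a cycle $C$, I would alternate a $\pm\epsilon$ perturbation of $\x^*$ along it. Each user and each tier on $C$ is incident to exactly two cycle edges of opposite signs, so both the row sums $\sum_t x^*_{i,t}$ and the column sums $\sum_i x^*_{i,t}$ are preserved; feasibility holds for small $|\epsilon|$. The stationarity condition \eqref{sys_kkt1} gives $F_{i,t} = \lambda_i + \mu_t$ on each cycle edge, so the alternating combination of these equalities makes the objective change telescope to $0$. Pushing $|\epsilon|$ to the boundary drives at least one fractional entry to $0$ or $J_i$, removing an edge from $G$ without loss of optimality; iterating, WLOG $G$ is a bipartite forest.

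The second step bounds the edge count of the forest using KKT on each tree component. In a component with $u$ user-nodes and $s$ tier-nodes there are $u + s - 1$ fractional edges, each carrying an equation $\lambda_i + \mu_t = F_{i,t}$; these $u + s - 1$ equations in $u + s$ dual unknowns leave a single scalar degree of freedom, pinned by complementary slackness from a slack job or capacity constraint somewhere in the component. Together with the observation that any full entry $x^*_{i,t} = J_i$ already saturates its user's job constraint (forcing every other entry of that user to $0$, and hence contributing no fractional edge), a careful component-by-component accounting should charge each fractional edge to a distinct tier node, yielding $|E(G)| \leq T$.

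The main obstacle I foresee is this last combinatorial accounting --- converting the per-component one-degree-of-freedom KKT rigidity into the global bound $T$. I expect the cleanest route is to first establish the bound under a genericity assumption on the $F_{i,t}$'s (so that the reasoning preceding the lemma, namely that more than $T-1$ dual relations would over-determine $\bmu$ and force an accidental identity among the $F_{i,t}$'s, applies directly), and then to lift the conclusion to arbitrary utilities by perturbing $F$ infinitesimally and passing to the limit using upper semicontinuity of the LP optimal set.
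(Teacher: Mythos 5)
Your cycle-elimination step is sound (and is really just the standard fact that an extreme point of a transportation-type LP has a forest for its support graph, echoing the $(N+T)$-sparsity of Remark~\ref{rem: greedy_sparse}), but it only gets you to $N'+T'-1$ fractional edges, not to $T$, and the combinatorial accounting you propose for closing that gap does not work as stated. In a tree component with $u$ user-nodes and $s$ tier-nodes you have $u+s-1$ edges, so charging each fractional edge to a distinct tier node forces $u\leq 1$ per component; yet components with $u\geq 2$ can genuinely occur for degenerate utilities. For instance, two users $i\neq j$ with incomplete jobs, each holding a single fractional entry in the same tier $t$, satisfy the KKT conditions with $\lambda_i=\lambda_j=0$ and $\mu_t=F_{i,t}=F_{j,t}$; this is a 2-edge, 1-tier tree that defeats the charging scheme whenever $F_{i,t}=F_{j,t}$. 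The per-component ``one degree of freedom'' observation constrains the duals, not the number of fractional edges, so no purely graph-theoretic argument can deliver the bound $T$ without a genericity hypothesis on the $F_{i,t}$'s.

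Your fallback is therefore not optional --- it is the proof, and it is exactly the paper's proof: use the relations $\mu_{t_1}-\mu_{t_2}=F_{i,t_1}-F_{i,t_2}$ from \eqref{eq: mu_relation} to argue that more than $T$ partial allocations would impose a non-trivial linear identity on the $F_{i,t}$'s, which fails on a dense set of utility matrices; then perturb $F$ into that dense set, extract a convergent subsequence of optimal solutions (stabilizing the finite exceptional index set along the subsequence), and use upper semicontinuity of the LP optimal correspondence to conclude that the limit is optimal and retains at most $T$ fractional entries. Two details from the paper's version that your sketch should absorb: first, the paper appends a dummy tier $T+1$ with infinite capacity and zero utility so that every job completes, which guarantees that any user with one partial entry has a second positive entry and hence that the pairing $(i,t_1)\mapsto t_2$ producing \eqref{eq: mu_relation} always exists (otherwise users with incomplete jobs and $\lambda_i=0$ must be handled separately, as in the two-user example above); second, one must check that the eliminated relation among the $F_{i,t}$'s is genuinely non-trivial, since triples of instances coming from a single user's three partial tiers telescope to $0=0$. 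With those points in place your plan reduces to the paper's argument rather than offering an alternative route.
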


Recall that as a result of Lemma \ref{lemma:ordering_sys}, we know that the optimal resource allocation in SYS is $(N+T)$-sparse. 
Sparsity in both SYS and SYS-LP solutions suggests that the resource allocation in the two cases is similar. 
In Theorem \ref{thm:gap_bound}, we formally bound the gap, $V^R - \hat V$.

\begin{theorem}\label{thm:gap_bound}
Let $V^*$ and $V^R$ be the optimal objectives of the problem in SYS and SYS-LP, respectively. 
Then, we have
\begin{equation}
\label{eq: bound_gap}
\hat V \geq \left(1 - \frac{T(\max_i J_i)}{\min_t M_t}\right) V^R.
\end{equation}  
\end{theorem}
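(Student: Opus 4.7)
The plan is to bound $V^R - \hat V$ by writing it as a per-user sum and attacking each piece with the KKT conditions, before invoking Lemma~\ref{lem: sparsity} and LP duality. First, I would fix an optimal $\x^R$ of SYS-LP whose set of \emph{partial allocations} $S := \{(i,t) : 0 < x_{i,t}^R < J_i\}$ has $|S| \leq T$ (Lemma~\ref{lem: sparsity}). Based on $\x^R$, I would partition users into three groups: (i) \emph{clean-complete} users, whose $J_i$ units sit entirely in a single tier $T_i$; (ii) \emph{spread-complete} users, who are fully served but across multiple tiers, so every nonzero entry of row $i$ lies in $S$; and (iii) \emph{incomplete} users, for whom $\sum_t x_{i,t}^R < J_i$, again with every nonzero entry in $S$.

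Using the definition of $\hat V$ via $\hat y$ and the identity $U_{i,T_i} = J_i F_{i,T_i}$, I would compute the loss user by user. Clean-complete users contribute $0$. A spread-complete user $i$ contributes $\sum_{t} x_{i,t}^R (F_{i,t} - F_{i,T_i})$ (using $\sum_t x_{i,t}^R = J_i$), and an incomplete user contributes $\sum_{t} x_{i,t}^R F_{i,t}$. The key substitution is the KKT system \eqref{sys_kkt1}--\eqref{sys_kkt3}: whenever $x_{i,t}^R > 0$ one has $F_{i,t} = \mu_t + \lambda_i$, and every incomplete user has $\lambda_i = 0$ by complementary slackness. The $\lambda_i$ term therefore cancels in the spread-complete case, so its loss becomes $\sum_t x_{i,t}^R(\mu_t - \mu_{T_i})$, while each incomplete user's loss becomes $\sum_t x_{i,t}^R \mu_t$. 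Summing over users and discarding the nonnegative $-J_i\mu_{T_i}$ contributions yields
\begin{equation*}
V^R - \hat V \;\leq\; \sum_{(i,t) \in S} x_{i,t}^R\, \mu_t.
\end{equation*}

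Finally, since $x_{i,t}^R < \max_i J_i$ for each $(i,t)\in S$ and $|S|\leq T$, the right-hand side is at most $T\,(\max_i J_i)\,\max_t \mu_t$. Strong duality gives $V^R = \sum_t \mu_t M_t + \sum_i \lambda_i J_i$, hence $V^R \geq \mu_t M_t \geq (\min_t M_t)\,\mu_t$ for every $t$, i.e.\ $\max_t \mu_t \leq V^R/\min_t M_t$. Substituting and rearranging delivers \eqref{eq: bound_gap}. I expect the main obstacle to be the spread-complete case: a crude bound using $F_{i,t}$ in place of the difference $F_{i,t}-F_{i,T_i}$ leaves an unwanted residual of the form $\sum_i \lambda_i J_i$, which can be a substantial fraction of $V^R$; it is precisely the KKT-driven cancellation of $\lambda_i$ across a spread user's own tiers that reduces the entire loss to a sum of $\mu_t$'s, where the bound $V^R \geq \mu_t\min_t M_t$ closes the loop.
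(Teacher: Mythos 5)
Your proof is correct, and it takes a genuinely different route from the paper's. The paper also starts from the $T$-sparse optimal solution guaranteed by Lemma~\ref{lem: sparsity}, but it then works tier by tier on the \emph{relative} gap: it writes $(V^R_t-\hat V_t)/V^R_t$ as a ratio, lower-bounds the denominator via a primal exchange inequality ($F_{j,m}-F_{j,n}\ge F_{i,m}-F_{i,n}$ whenever user $i$ is split across tiers $m<n$ and user $j$ is served in tier $m$), and then invokes the claim that any tier containing a partial allocation is saturated, $\sum_j x^R_{j,t}=M_t$, to conclude $(V^R_t-\hat V_t)/V^R_t\le T(\max_i J_i)/\min_t M_t$ before summing over $t$. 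You instead bound the \emph{absolute} gap through the dual: the KKT equality $F_{i,t}=\mu_t+\lambda_i$ on the support cancels $\lambda_i$ within each split user's own row (and complementary slackness kills $\lambda_i$ for incomplete users), leaving $V^R-\hat V\le\sum_{(i,t)\in S}x^R_{i,t}\,\mu_t$, after which strong duality, $V^R=\sum_t\mu_t M_t+\sum_i\lambda_i J_i\ge\mu_t M_t$, gives $\max_t\mu_t\le V^R/\min_t M_t$. The two mechanisms are close cousins---the paper's exchange inequality is itself a consequence of the same KKT conditions---but your version buys two things: it does not need the (only informally justified) assertion that every tier with a partial allocation runs at full capacity, and it avoids the per-tier ratio manipulation, which the paper handles somewhat loosely (its identification of $V^R_t$ with $\sum_j F_{j,t}x^R_{j,t}$ involves an implicit reindexing). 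One cosmetic slip on your side: the discarded term $-J_i\mu_{T_i}$ is nonpositive, not nonnegative, which is of course exactly why discarding it preserves the upper bound.
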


Thus, the relaxation from SYS-ILP to SYS-LP does not sacrifice much in terms of optimality. This is especially true in practical settings where the number of tiers, $T$, is small (e.g., three tiers corresponding to slow, medium or fast service) and the number of users (vis-à-vis the number of machines, $M_t$) is large.
We observe something similar in Fig. \ref{fig:error_v}, where we plot the mean and standard deviation of the percentage error between $V^R$ and $\hat V$, that is, $100*(V^R - \hat V)/V^R$ over 20 independent trials for $T=5$ tiers and $N=100$ users, where utilities and job sizes are independently and identically distributed (i.i.d.) across users. $J_i, ~\forall ~i\in [N]$, are uniformly chosen between 10 and 90. We have taken $M_t(= M)$ constant across tiers.
Recall that the plot is an upper bound on the percentage error between $V^R$ and $V^*$.
Further, under an additional constraint on allocation, we derive a tighter bound in Remark~\ref{rem: tighter_bound} which is independent of the number of tiers, T.
\begin{figure}[t]
    \centering
    \includegraphics[scale=0.8]{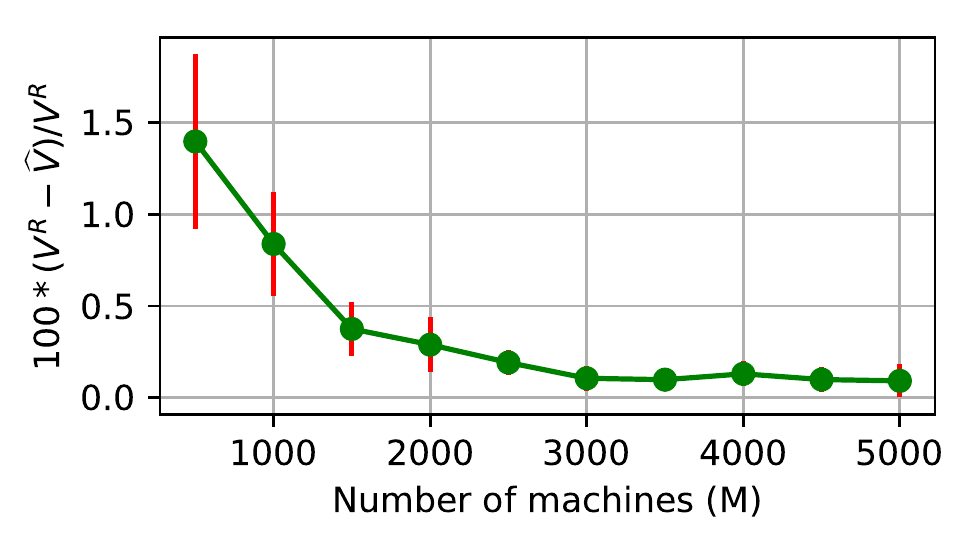}
    \caption{Percentage error between $V^R$ (the utility obtained with SYS-LP problem) and $\hat V$ (the utility obtained by applying integer constraints on SYS-LP solution).
    }
    \label{fig:error_v}
\end{figure}


{\bf Premium services are charged more}: We now give a toy example to illustrate some of the features of our pricing scheme. 
Consider a system with $N=3$ users having utility functions as shown in Fig. \ref{fig:utility_examples}. 
Let the total job size for each user be $10$. 
Let the cloud provider allocate resources in $T=3$ tiers corresponding to job completion under $0.1, 10$ and $1000$ seconds. 
Also, let $M_t = 10$ for $t=1,2,3$. Thus, according to the utilities shown in Fig. \ref{fig:utility_examples}, the $3\times 3$ utility matrix $\mathbf{U} = [U_i(t)|~ i\in \{1,2,3\}, t\in \{1,2,3\}]$ is given by
$$
\mathbf{U} = 
\begin{bmatrix}
3 & 0 & 0\\
4 & 2.5 & 1\\
2 & 2 & 2
\end{bmatrix}.
$$
For both SYS and SYS-LP, the following resource allocation is optimal: \begin{equation}\label{x_it_fig1}
	 x_{i,t} := \begin{cases}
		10, \text{ if } i=t,\\
		0, \text{ if } i\neq t,
	\end{cases} \forall~ i, t \in \{1,2,3\}.
\end{equation}
The above allocation makes intuitive sense because it maximizes the sum utility of the system by maximizing the individual utility of the users.
Any vector $\bmu = (\mu_1, \mu_2, \mu_3)$ such that $0.15 \leq \mu_1 \leq 0.3$, $0 \leq \mu_2 \leq 0.25$, $0 \leq \mu_3 \leq 0.2$, $\mu_1 - \mu_2 \geq 0.15$ and $\mu_2 \geq \mu_3$ serves as an auxiliary pricing vector for the three tiers corresponding to the above optimal resource allocation.
(Observe that if we take $\lambda_1 = 0.3 - \mu_1$, $\lambda_2 = 0.25 - \mu_2$, and $\lambda_3 = 0.2 - \mu_3$, then the KKT conditions \eqref{sys_kkt1}, \eqref{sys_kkt2}, and \eqref{sys_kkt3} are satisfied.)
An example of such an auxiliary pricing vector for the three service tiers is given by 
$\bmu = [0.259, 0.083, 0.048]$ (obtained through the \texttt{CVX} optimizer \cite{cvx,gb08}).
Note that in our scheme, the cloud provider sets prices for different service tiers proportional to $\bmu$. Thus, user 1 is charged 
more for getting her job completed in tier 1 as opposed to user 3 who is charged less for being flexible. Note that this is also better than a myopic greedy allocation where user 2 would have been selected first based on her high utility value in the first tier.
Our pricing scheme thus takes into account the users' delay-sensitivity and charges each user accordingly.
The conditions on $\bmu$ show that a flat pricing scheme cannot achieve this result in our toy example setting.

\section{Pricing Mechanisms to Learn User Utilities}\label{sec:pricing_mechanism}

In the previous section, we assumed that the utilities of the users are known to the cloud service provider. This is not true in general.
However, through dynamic pricing and users' responses to this pricing, the cloud service provider can converge to the optimal scheduling and pricing.
Such pricing mechanisms have been studied in economics for two-sided (supply and demand) markets and are called Walrasian auctions \cite{walrasian, microeconomic_mascolell}. In the networking 
literature, Kelly has used similar mechanisms to allocate bandwidth optimally over a network \cite{kelly1997charging,kelly1998rate}. 

In this section, we devise dynamic pricing schemes that would help the cloud provider to maximize the utility of the system. 
To the best of our knowledge, \emph{this is the first work to study dynamic pricing mechanisms for scheduling problems}. This is a promising research direction with potentially high impact on the economics and resource allocation of next-generation computing systems.
Next, we derive an algorithm that sets pricing for cloud resources without assuming that the users' utilities are known.

We first decompose the system problem from the $i$-th user's perspective and from the cloud provider's perspective. 
To introduce the user feedback, we define a new variable $m_{i,t}$, which can be thought of as the budget of user $i$ for service interval $t$. 
Let $q_t$ be the price set by the cloud service provider at service interval $t$.
Then the amount of resources allocated to the $i$-th user in service interval $t$ is given by $x_{i,t} = m_{i,t}/q_t$. 
With this interpretation in mind, 
we formulate the user problem for the $i$-th user by taking the terms from the Lagrangian of the system formulation (Eq. \eqref{lag:system}) that are relevant to the $i$-th user along with the constraint on the number of functions in her job (Eq. \eqref{tot_job_cons}).
\begin{align}
\text{\bf \underline{USER(i)}}\nn\\ 
\maxi_{m_{i,t}\geq 0} ~~~~~&\sum_{t=1}^T \frac{m_{i,t}}{q_t}\left(F_{i,t} - q_t\right)  \nn \\
\text{subject to} ~~~~~&\sum_{t=1}^T \frac{m_{i,t}}{q_t} \leq J_i \label{user_prob_cons}.
\end{align}
The $i$-th user thus allocates a larger budget $m_{i,t}$ at time $t$ if the user utility per function at time $t$, $F_{i,t}$, is sufficiently larger than the price $q_t$ set by the cloud service provider. Note that, if $q_t = 0$, then $m_{i,t} = 0$. Because, otherwise, we would have $m_{i,t}/q_t = \infty$, and this would violate constraint \eqref{user_prob_cons}. Given a price vector $\q = [q_t], t\in [T],$ the $i$-th user solves the USER(i) problem to obtain the budget vector $m_{i,t}, t\in [T]$. The cloud provider then receives budgets from all users $m_{i,t}, t\in [T], i\in [N],$ and solves the CLOUD problem defined as follows: 
\begin{align}
\text{\bf \underline{CLOUD}}\nn\\  
\maxi_{x_{i,t}\geq 0} ~~~~~&\sum_{t=1}^T\sum_{i=1}^N m_{i,t}\log x_{i,t}  \nn \\
\text{subject to}~~~~~& \sum_{i=1}^N x_{i,t} \leq M, ~\forall~ t \in [T] \label{cloud_prob_cons}.
\end{align}

\begin{theorem}
\label{thm:decomposition}
There exist equilibrium matrices $\x = (x_{i,t}, i \in [N], ~t \in [T])$ and $\m = (m_{i,t}, i\in [N], t\in [T])$, and an equilibrium price vector $\q = (q_t, t\in [T])$ such that
\begin{enumerate}[(i)]
	\item $\m_i = (m_{i,t}, t\in [T])$ solves USER(i), $\forall~ i\in [N],$
	\item $\x = (x_{i,t}, i\in [N], t\in [T])$ solves the CLOUD problem,
	\item $m_{i,t} = x_{i,t} q_t,$ for all $i\in [N], t\in [T]$,
	\item for any $t$, if $\sum_i x_{i,t} < M_t$, then $q_t = 0$.
\end{enumerate}
Further, if any matrix $\x = (x_{i,t}, i\in [N], t\in [T])$ that is at equilibrium, i.e. has a corresponding matrix $\m$ and a vector $\q$ that together satisfy (i), (ii), (iii), and (iv), then $\x$ solves the system problem SYS-LP.
\end{theorem}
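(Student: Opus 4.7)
My plan is to tie the equilibrium directly to the KKT conditions \eqref{sys_kkt1}--\eqref{sys_kkt3} of SYS-LP. The key observation is that USER(i) is the per-user Lagrangian relaxation of SYS-LP with the tier capacity prices $\mu_t$ playing the role of $q_t$, while CLOUD is strictly concave in $\x$ and decouples across tiers into proportional-fair subproblems whose KKT conditions enforce a per-tier market-clearing multiplier $\gamma_t$ with $m_{i,t}=\gamma_t x_{i,t}$. The candidate equilibrium is therefore obtained by taking a KKT triple $(\x^*,\bl^*,\bmu^*)$ for SYS-LP---which exists since SYS-LP is a feasible bounded LP---and setting $q_t:=\mu_t^*$ and $m_{i,t}^*:=q_t\,x_{i,t}^*$.

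For the existence direction, (iii) holds by construction and (iv) is precisely the complementary slackness inside \eqref{sys_kkt2}. For (i), I would evaluate the USER(i) objective at $\m_i^*$, which becomes $\sum_t x_{i,t}^*(F_{i,t}-q_t)$; applying \eqref{sys_kkt1} replaces $F_{i,t}-q_t$ by $\lambda_i^*$ on the support of $x_{i,t}^*$ and upper-bounds it by $\lambda_i^*$ off the support. For any alternative feasible budget $m_i$ with $\tilde x_{i,t}:=m_{i,t}/q_t$, the same upper bound yields $\sum_t \tilde x_{i,t}(F_{i,t}-q_t)\leq\lambda_i^*\sum_t \tilde x_{i,t}\leq\lambda_i^*J_i$, matching the value at $\m_i^*$ (which equals $\lambda_i^*J_i$ whenever $\lambda_i^*>0$ by \eqref{sys_kkt3}). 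For (ii), CLOUD decouples over $t$ and each subproblem has the unique maximizer $x_{i,t}=M_t\,m_{i,t}^*/\sum_j m_{j,t}^*$; substituting $m_{j,t}^*=q_t x_{j,t}^*$ and using \eqref{sys_kkt2} (which gives $\sum_j x_{j,t}^*=M_t$ whenever $q_t>0$) collapses this to $x_{i,t}^*$.

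For the converse, given an equilibrium $(\x,\m,\q)$, I would extract from optimality of each USER(i) a multiplier $\lambda_i\geq0$ for the budget constraint; the corresponding KKT reads $F_{i,t}-q_t\leq\lambda_i$ with equality whenever $m_{i,t}>0$. Setting $\mu_t:=q_t$ and using (iii) to identify the supports of $m_{i,t}$ and $x_{i,t}$ on tiers with $q_t>0$ recovers \eqref{sys_kkt1}. Complementary slackness for the USER(i) budget constraint yields \eqref{sys_kkt3}, while (iv) combined with the capacity constraint in CLOUD yields \eqref{sys_kkt2}. Since SYS-LP is convex and the full KKT system is satisfied by $(\x,\bl,\bmu)$, the equilibrium allocation $\x$ is globally optimal for SYS-LP.

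The main subtlety I expect is the degenerate case $q_t=0$, where the substitution $x_{i,t}=m_{i,t}/q_t$ is not well-defined and the CLOUD subproblem contains $0\cdot\log 0$ terms. The clean fix is to set $m_{i,t}=0$ whenever $q_t=0$ (as the authors already note for USER(i)), adopt the convention $0\log 0=0$ so that any feasible $x_{i,t}$ maximizes the corresponding tier of CLOUD, and derive the $q_t=0$ case of \eqref{sys_kkt1} directly from the SYS-LP KKT solution rather than through USER(i); the complementary slackness \eqref{sys_kkt2} is then vacuous in such tiers because $\mu_t=0$.
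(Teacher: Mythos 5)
Your proposal is correct and follows essentially the same route as the paper: both construct the equilibrium from a KKT triple $(\x^*,\bl^*,\bmu^*)$ of SYS-LP via $q_t=\mu_t^*$ and $m_{i,t}=q_t x_{i,t}^*$, and both prove the converse by reassembling the SYS-LP KKT system from the USER(i) multipliers together with $\bmu=\q$. The only differences are presentational --- you certify optimality in USER(i) and CLOUD by a direct weak-duality certificate and the closed-form proportional-fair solution where the paper writes out the subproblems' KKT conditions explicitly --- and your handling of the degenerate $q_t=0$ tiers is, if anything, more careful than the paper's.
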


Condition (i) says that the equilibrium budgets $(m_{i,t}, t \in [T])$ form an optimum response by the user $i$ to the equilibrium rates $(q_t, t \in [T])$.
Condition (ii) says that the equilibrium allocation matrix $\x$ is the solution to the CLOUD problem with respect to the equilibrium budgets $\m$.
Further, the allocations, prices and the budgets are consistent under equilibrium, i.e. $m_{i,t} = x_{i,t} q_t$ as per our interpretation of budgets (from condition (iii)).
And finally, (iv) says that the price $q_t = 0$ if that corresponding service interval is not full, i.e. $\sum_i x_{i,t} < M_t$.
Besides, in the proof of Theorem \ref{thm:decomposition}, we show that if $\bmu$ is the dual variable corresponding to an optimal solution $\x$ of SYS-LP, then $\q = \bmu$ form an equilibrium price vector.

{\bf Price tracking using gradient descent}: In many cases, it is easier to work with the dual problem because there are less variables involved. For example, for the CLOUD problem, the primal has $NT$ variables, which can be significantly large in comparison to only $T$ variables in the dual problem (coming from $T$ constraints in the primal). Moreover, working on the dual problem allows the cloud provider to work directly on the price vector $\q$, which is what it shows to the users.
We can derive the following dual problem for the CLOUD problem using the Lagrangian $L(\x,\q)$ in Eq.~\eqref{lag:cloud}. 
\begin{align}\label{cloud_dual}
\maxi_{q_t\geq 0}~~~ & \sum_{t=1}^T\log q_t \left(\sum_{i=1}^N m_{i,t}\right) - \left(\sum_{t=1}^T M_t q_t\right).
\end{align}
The gradient of the above objective w.r.t. $q_t$ is given by $\left(\frac{\sum_{i=1}^N m_{i,t}}{q_t} - M_t\right)~\forall~t\in[T].$ We use gradient descent to solve Eq. \eqref{cloud_dual} and make the pricing scheme $\q$ track users' budgets $\m$.

Next, based on our results in this section, we derive an algorithm where the cloud provider updates the prices of its resources. Here, we assume that the user utilities are not known to the cloud provider (which is generally the case) but they provide their budget information (represented as the $\m$ matrix in the USER/CLOUD problems) on a semi-regular basis which is determined based on the current price vector (represented as $\q$ in the USER/CLOUD problems) shown by the cloud service provider (see Alg. \ref{algo:price_tracking}).

\begin{algorithm}[ht]
\SetAlgoLined
\SetKwInOut{Input}{Input}
\Input{Total number of machines $M$, step-size $\kappa$, error tolerance $\epsilon$, gradient steps $G$, job sizes $J_i$ and utilities $U_i(\cdot)$ that are known only to user $i$ for $i\in[N]$}
\textbf{Initialization}: Cloud service provider shows some initial prices $\q = \q_0\in\R^T$ (say, the all ones vector)\\
  \While{$\|\q - \q_{prev}\| \geq \epsilon\|\q_{prev}\|$}{
$\q_{prev} = \q$\\
Users, for all, $i\in [N]$, solve for budget vector $\m_i\in \R^T$ using the price vector $\q$ in the USER(i) problem\\
The cloud service provider receives the budget matrix $\m$ and does the following\\
step = 0\\
\While{step $\leq G$}
{$q_t = \max\left[q_t + \kappa\left(\frac{\sum_{i=1}^N m_{i,t}}{q_t(\tau)} - M_t\right),0\right]~\forall~t\in [T]$\\
$step = step + 1$
}
}
$\q^* = \q, \m^* = \m$\\
$z_{i,t} = m_{i,t}^*/q_t^*$ for all $i\in[N], t\in [T]$\\
Use Algorithm \ref{algo:projection} to obtain $x_{i,t}^*$, the projection of $z_{i,t}$ to the constraint sets $\sum_i z_{i,t} \leq M_t, ~\forall~t$ and $z_{i,t} \geq 0~\forall~i,t,k$. \\
\KwResult{Solution to SYS, that is, the optimal pricing vector $\q^*$ and the optimal resource allocation vector $\x^*$}
 \caption{Finding the optimal resource allocation and pricing without utility information at the cloud end}
 \label{algo:price_tracking}
\end{algorithm}

\begin{algorithm}[ht]
\SetAlgoLined
\SetKwInOut{Input}{Input}
\Input{$\z \in\R^{N\times T}$, Constraint sets $C_1, C_2 \subset \R^{N\times T}$, where $C_1 := \{z_{i,t} |~\sum_i z_{i,t} \leq M_t, ~\forall~t\}$, and $C_2 := \{z_{i,t}|~z_{i,t} \geq 0~\forall~i,t\}$, error tolerance $\epsilon$}
\textbf{Initialization}: Define $\z_1 = \z,~~~ \z_2 = \z$. \\
  \While{$\|\z - \z_{prev}\| \geq \epsilon\|\z_{prev}\|$}{
$\z_{prev} = \z$\\
$(z_1')_{i,t} = z_{i,t} - \max\left[\frac{\sum_{i}z_{i,t} - M_t}{N}, 0\right]~\forall~i,t$.\\
$\z_2' = \z,~~~~ \z_2'[\z_2' < 0] = 0$ \\
$\z = (\z_1' + \z_2')/2$\\
$\z_1 = \z + \z_1 - \z_1'$\\
$\z_2 = \z + \z_2 - \z_2'$
}
\KwResult{$\z$: Projection of the input matrix to sets $C_1$ and $C_2$}
 \caption{Han's algorithm for projection on the intersection of convex sets \cite{han1988successive,phade2017distributed}}
 \label{algo:projection}
\end{algorithm}

An illustration of the process is provided in Fig. \ref{fig:price_tracking}.
Further, for $N=100$ users, $T=5$ hours and i.i.d. users' utilities and job sizes, we plot the results of Algorithm \ref{algo:price_tracking} (with gradient steps $G=40$ and step-size $\kappa=10^{-6}$) in Fig. \ref{fig:price_tracking_perf}. After only $\sim$10 iterations (budget/pricing updates), the tracking-based scheme converges to the optimal solution (Fig. \ref{fig:tracking_util}). Moreover, as shown in Fig. \ref{fig:tracking_perf_price_tier0}, with only slight changes in its pricing scheme, the cloud provider is able to nudge the users to change their budgets to match the optimal solution.

\begin{figure*}[t]
 \centering
 \includegraphics[width=0.75\textwidth]{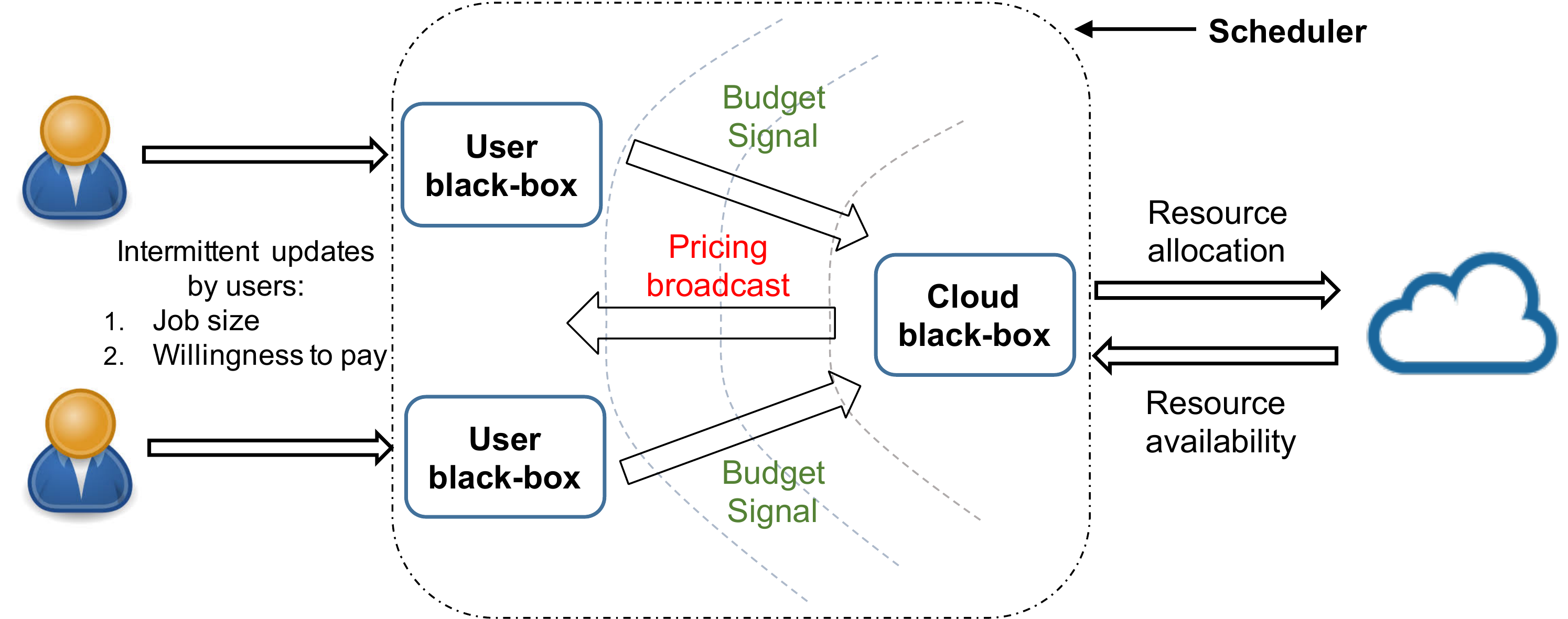}
 \caption{Decomposition into user and cloud problems. Here, we visualize a user black-box for each user $i$ that solves the $\mathrm{USER}(i)$ problem using the prices shown by the cloud black-box, the unscheduled functions in the job of user $i$ and her willingness to pay (intermittently updated). 
The cloud black-box runs Algorithm~\ref{algo:price_tracking} to update the prices
using the budget signals from the user black-boxes and the capacity constraints from the cloud. 
}
 \label{fig:price_tracking}
 \end{figure*}

\begin{figure*}[t]
    \centering
    \begin{subfigure}[t]{0.45\textwidth}
        \centering
        \includegraphics[scale=0.60]{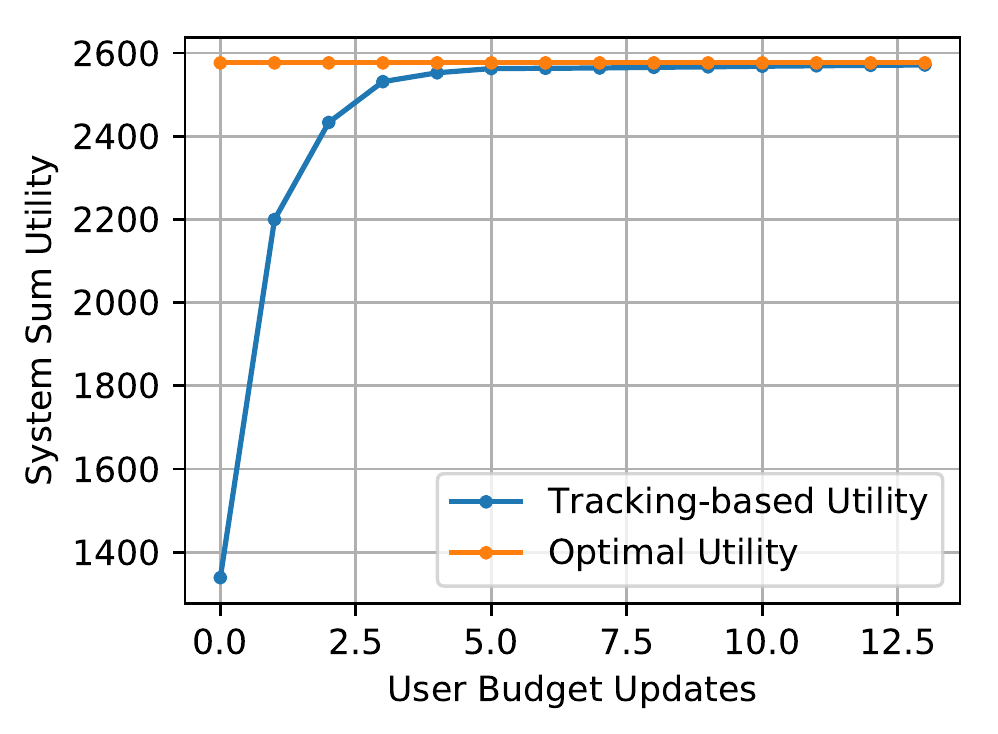}
        \caption{ 
        After less than 10 budget updates, the system utility obtained through Algorithm \ref{algo:price_tracking} approaches the optimal utility.
        }
        \label{fig:tracking_util}
    \end{subfigure}
    ~
    \begin{subfigure}[t]{0.45\textwidth}
        \centering
        \includegraphics[scale=0.60]{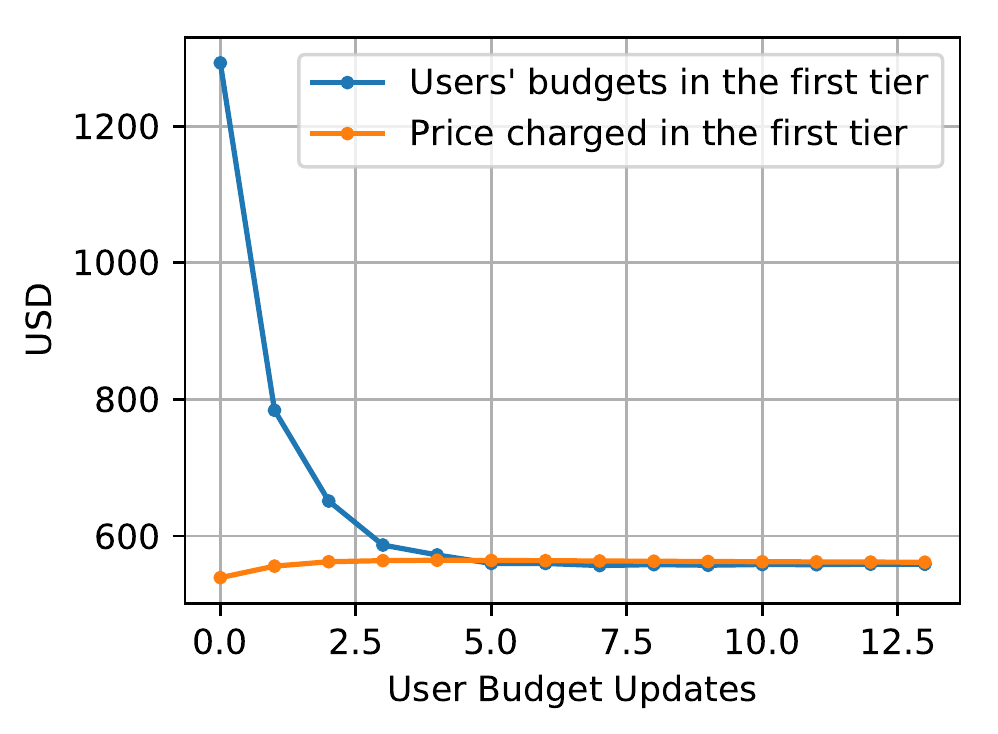}
        \caption{
        After less than 10 budget updates, the prices and users' budgets reach an equilibrium.
        }
        \label{fig:tracking_perf_price_tier0}
    \end{subfigure}
    \caption{Finding the SYS-LP solution through price tracking  (Algorithm \ref{algo:price_tracking}). The algorithm converges after $\sim$10 iterations.}
\label{fig:price_tracking_perf}
\end{figure*}

\section{A Market Simulation}
\label{sec:sim}

In this section, we run an experiment for 60 days with $N = 100$ users and $T=5$ tiers, where each day, users' utilities are changing based on the market trends\footnote{An implementation of the price tracking algorithm (Alg. \ref{algo:price_tracking}) and the market simulation described in this paper is available 
at: https://github.com/vvipgupta/serverless-resource-allocation-and-pricing
}. 
Specifically, for the first 30 days, we synthetically generate utilities that follow an upward trend based on the market following by a downward trend for the next 30 days. To simulate this, we generate $u_{it},\forall~i,t$, from a uniform distribution in $[5,10]$ in an i.i.d. fashion. 
The job size for the $i$-th user, $J_i~\forall~i$, is an i.i.d. integer chosen between $10$ to $100$ (which remains constant throughout the 60 days of the experiment) and $M_t = 5000$ for all $t$. 
To generate an upward market trend, we add $0.5$ to $u_{i,t},\forall~i,t$, with probability 0.55 and $-0.5$ with probability $0.45$ (the probabilities are flipped to generate a downward trend). 
\footnote{Note that the user problem can have multiple solutions. To ensure  convergence to a unique solution with algorithm \ref{algo:price_tracking}, we add a small quadratic regularizer to both SYS-LP and the user problem.}

We compare the following three schemes:
\begin{itemize}
    \item {\bf Optimal pricing}: Here, we assume that the cloud provider is able to solve for optimal resource allocation and maximize system utility by utilizing the knowledge of users' utilities at each day. 
\item {\bf Tracking-based pricing}: In this case, the users' utilities are not known, and the cloud provider tracks users' utilities based on their budget signals (as described in Algorithm \ref{algo:price_tracking}). 
The cloud provider is assumed to update the prices everyday.
\emph{Users send budget signals only once per day.} These budget signals depend on the user's utility function on that day and the prices published by the cloud on that day.

\item {\bf First-come-first-serve} (also known as first-in-first-out): 
Here, the pricing of the resources remains constant (as currently employed by most commercial service providers). 
It is the optimum prices determined by the utilities on day 1 and does not capture the mood of the market.
Each day, the users are allocated resources on a first-come-first-serve (which is assumed to be random in order) or a lottery basis. 
\end{itemize}

\begin{figure}[t]
        \centering
        \includegraphics[scale=0.8]{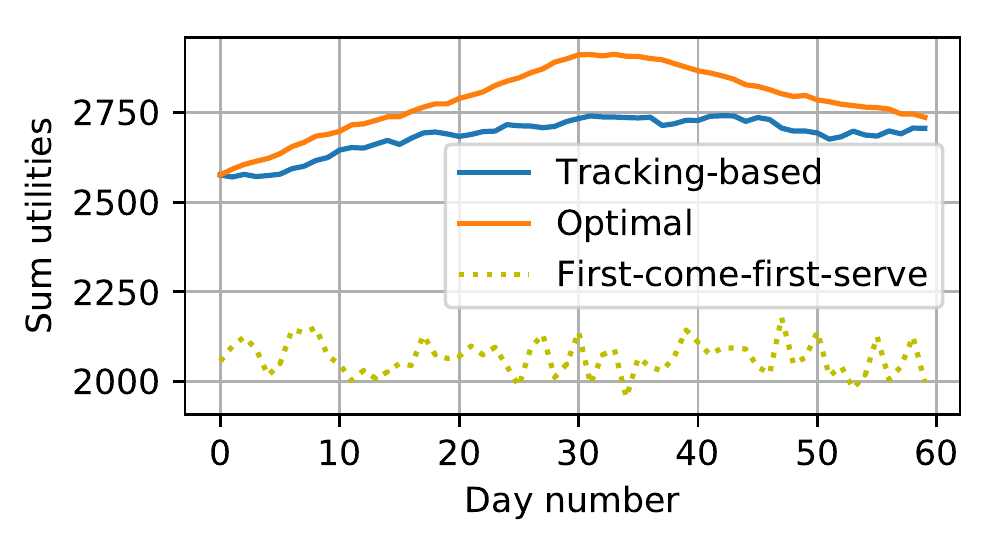}
        \caption{Sum utilities for three resource allocation schemes}
\label{fig:market_sum_utility}
\end{figure}

In Fig. \ref{fig:market_sum_utility}, we plot the sum utility of the system for all the three schemes across 60 days. Note that the tracking-based scheme accurately captures the trends of the market and is always within $8\%$ of the optimal utility, and this happens with only limited feedback from users (where they send budget signals only once per day). The first-come-first-serve scheme is clearly suboptimal with a deviation of as much as $38\%$ from the optimal utility.

\begin{figure*}[t]
    \centering
    \begin{subfigure}[t]{0.45\textwidth}
        \centering
        \includegraphics[scale=0.6]{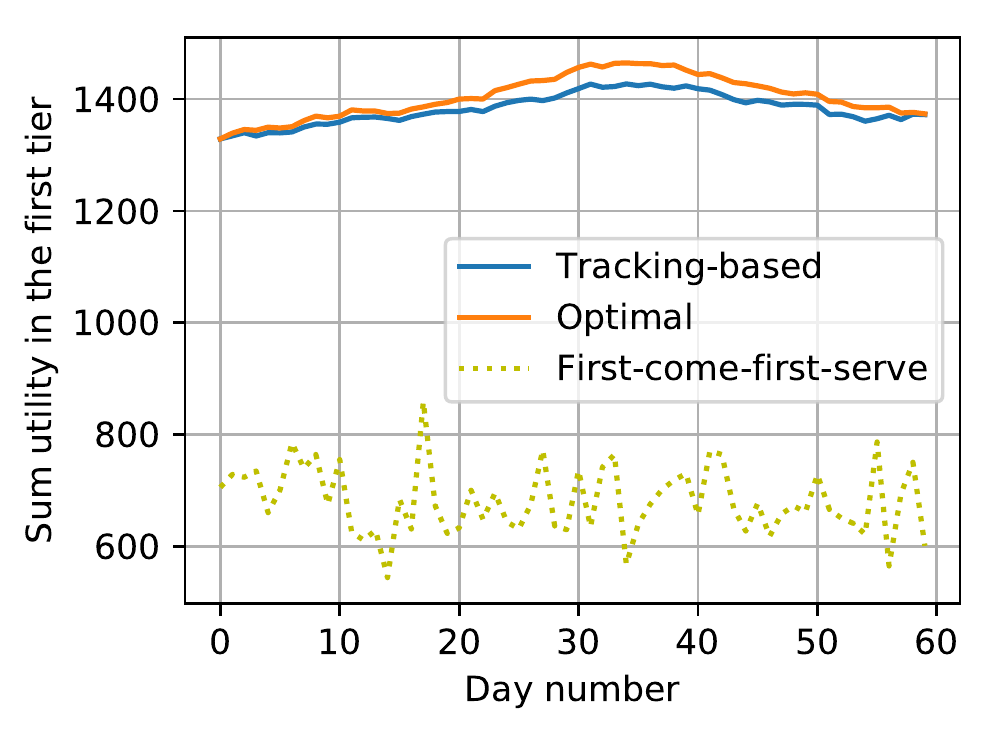}
        \caption{Sum utility in the first tier across 60 days
        }
        \label{fig:utils_tier0}
    \end{subfigure}
    ~
    \begin{subfigure}[t]{0.45\textwidth}
        \centering
        \includegraphics[scale=0.6]{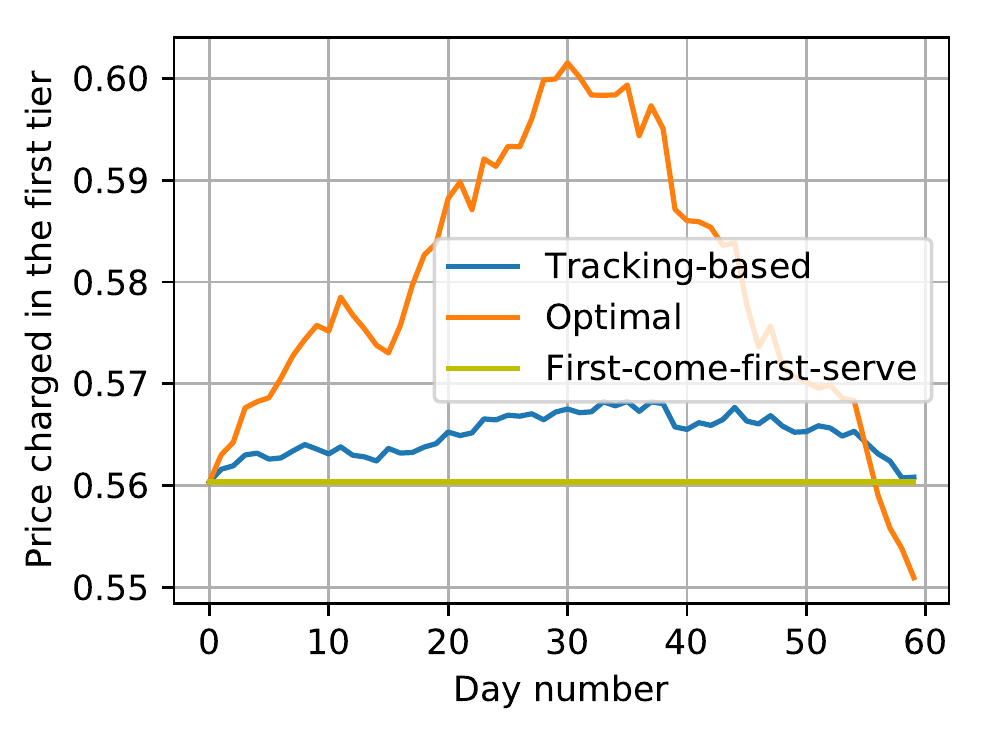}
        \caption{ Corresponding price charged by the cloud provider 
        }
        \label{fig:prices_tier0}
    \end{subfigure}
    \caption{Utility and price charged in the first tier}
\label{fig:market_tier0}
\end{figure*} 

In Fig. \ref{fig:market_tier0}, we plot the utility and price charged across all users only in the first tier. Again, the tracking-based scheme closely follows the optimal utility and lies within $3\%$ of it (Fig. \ref{fig:utils_tier0}). Furthermore, an important advantage of the tracking-based scheme is that it does not drastically change the prices throughout the sixty days (the maximum change is $<2\%$, see Fig. \ref{fig:prices_tier0}). This is unlike the optimal pricing scheme, where the change is $10\%$. From a cloud service provider point of view, the optimal scheme is not ideal since the pricing fluctuates heavily and may alienate users who expect some consistency in the pricing scheme. First-come-first-serve keeps a constant pricing scheme, but it reduces the optimal utility by a factor of 2. Hence, the tracking-based scheme represents an optimal trade-off between utility and fluctuation in pricing (and this trade-off can be controlled by the number of gradient steps in Algorithm \ref{algo:price_tracking}). Moreover, the tracking-based scheme does not require the utility information from users, alleviating potential privacy concerns.

\begin{figure*}[t]
    \centering
    \begin{subfigure}[t]{0.45\textwidth}
        \centering
        \includegraphics[scale=0.6]{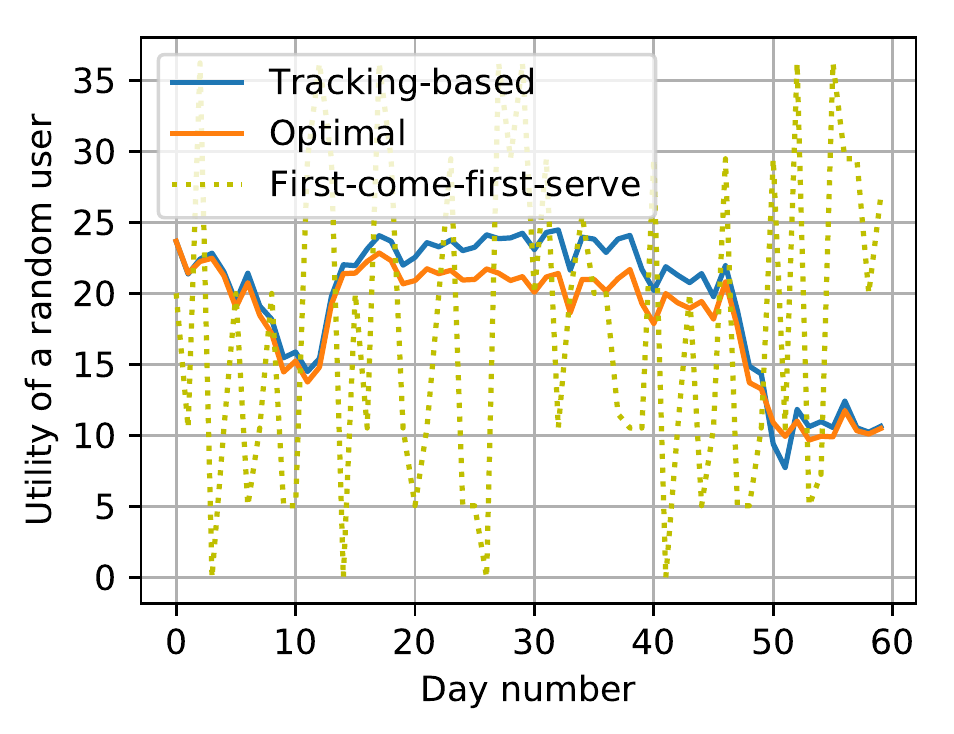}
        \caption{ Utility of a single user over 60 days}
        \label{fig:market_user0_1}
    \end{subfigure}
    ~
    \begin{subfigure}[t]{0.45\textwidth}
        \centering
        \includegraphics[scale=0.6]{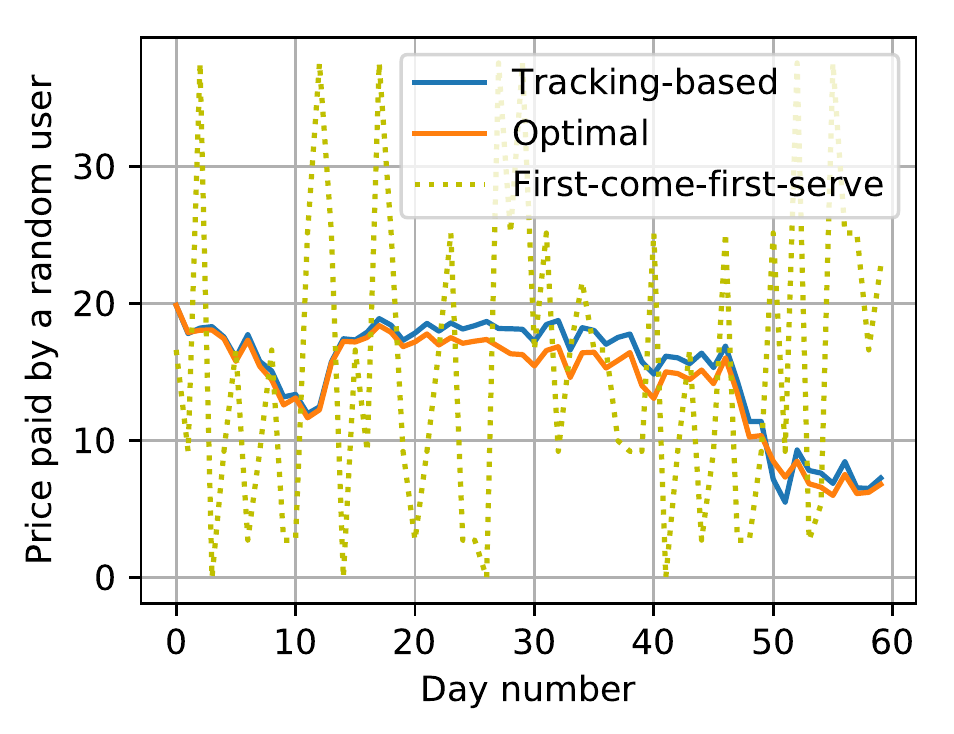}
        \caption{  Corresponding price charged}
        \label{fig:market_user0_2}
    \end{subfigure}
    \caption{Utility and price paid by a randomly chosen user}
\label{fig:market_user0}
\end{figure*} 

In Fig. \ref{fig:market_user0}, we pick a random user and plot her obtained utility and price paid throughout the sixty days. Again, the tracking-based utility is extremely close to the optimal utility of the user. Furthermore, even though the pricing obtained through the tracking-based scheme does not fluctuate much, we see that the revenue obtained from the user is very close to the optimal revenue. This is because by slightly changing the prices according to the mood of the market, the cloud provider is able to nudge the users to update their budget to go close to the optimal budgets. Also note that the price charged to the user on each day (Fig. \ref{fig:market_user0_2}) is less than the utility/willingness-to-pay of that user on that particular day (Fig. \ref{fig:market_user0_1}).


\section{Future work}
\label{sec: conclusion}

Our analysis and simulations show that there are several advantages of using dynamic multi-tier pricing over basic pricing.
This is only the first step towards designing practical game-theoretic resource allocation schemes for the cloud. Below, we describe some limitations of current work, and consequently, potential directions for future research, which are of interest to both industry and academia.

{\bf Utility-based pricing for serverful systems}:
The schemes developed in this paper can be extended to the serverful case provided the assumption that one function requires one machine and unit time holds. 
This assumption is more restrictive for the serverful case since the jobs can be large. 
However, in such scenarios, a large job can be broken into several smaller jobs, each of which requires unit time and machine. 
But this also introduces dependencies between jobs of a user which need to be taken into account in problem formulation. 
We explain such job dependencies in more detail next.

{\bf Job dependencies}: Often, certain jobs are recurrent and require execution of some other specific jobs for their execution.
These dependencies are generally represented as a graph. 
An important future direction is to design improved schedulers that take into account such job dependencies.
Ideally, jobs which have low dependencies or that have more reliable request-for-execution times should get a discount in their pricing because they allow the scheduler to plan more efficiently.

{\bf Wholesale discount}:
Current pricing schemes lack a wholesale discount for customers who are requesting large number of jobs with low variability in job sizes. 
Scheduling such jobs is cheaper for the provider since it can rent entire clusters of machines together, resulting in better resource utilization due to efficient bin packing.   

{\bf Flexible SLAs with probabilistic guarantees}: Cloud service providers define strict Service Level Agreements (SLAs) for their services, e.g.  in AWS Lambda, users are credited $10\%$ of their incurred charges if the error rate of Lambda functions goes beyond $0.05\%$\footnote{For details, see https://aws.amazon.com/lambda/sla/historical/}.
Note that providing such strict and premium SLAs require high-level maintenance of the cloud, the costs of which are indirectly borne by all the users. 
This may be unfair since some users could be equipped to tolerate job failures \cite{rashmi_inference,serverless_straggler_mitigation} and different users could have varying degrees of fault-tolerance.
An interesting future direction is to design pricing schemes that conform with each user's preferences using behavioral preference models from decision theory (see, for example, \cite{phade2019optimal}).

{\bf Heterogeneity in machines and jobs}: 
Currently, we have assumed that all the serverless machines/jobs have the same specifications in terms of their execution time, memory capacity, etc.
However, there is often heterogeneity both in the types of machines and jobs for such systems \cite{ms_azure_2020}, which can be taken into account in the problem formulation.

{\bf Revenue optimal schemes}:
In this paper we have focused on optimizing social welfare.
An alternative criteria would be to optimize the revenue generated by the cloud provider.

{\bf Stability analysis}:
Our algorithm is provably convergent to the optimal allocation and pricing when the conditions are static, and seems to track the optimum when the network conditions vary slowly.
The impact of limited feedback on the ability to track the optimum will be a topic of further studies.
\appendix
\section{Related Work}
\label{sec: related work}

The feature of pricing cloud resources based on delay-sensitivity of users has been lacking from most of the current pricing schemes that are being implemented as well as several real-time dynamic pricing schemes that have been proposed in the literature (we refer the readers to \cite{al2013cloud}, \cite{spot_pricing_survey} and \cite{cloud_pricing_survey2} for surveys on existing and proposed pricing schemes for the cloud). 
Notable exceptions are \cite{wang2012datacenter,joe2012mathematical,halima2017optimal}, where job completion deadlines affect pricing. 
We give a brief survey of related works here.

Our problem formulation is closest to the model in \cite{wang2012datacenter}, where authors consider the problem of revenue maximization in cloud datacenters.
However, we take a different approach towards solving it leading to a widely different pricing and allocation scheme.
(We elaborate on this in Remark~\ref{rem: compare_wang}.)
Further, in \cite{wang2012datacenter}, it is assumed that 
the service provider knows the users' utility functions and each user is charged an amount equal to the utility corresponding to her delay in service. In this paper, we relax the assumption of knowledge of users' utility.

Notice that the two key features of our model are capturing the users' delay-sensitivity and taking a utility-agnostic approach.
In the next couple of paragraphs, we recall some of related works in the literature with respect to these two features.

In relation to the first feature, suppose we ignore the market setting, 
then our problem is closely related to the problem of scheduling jobs in real-time systems.
This problem has been extensively studied in the literature. (See, for example, \cite{jensen1985time,vengerov2005adaptive,wu2004utility,clark1990scheduling}.) 
The notion of utility as a function of completion times has played a key role in many of these works.
Varied algorithms with supporting simulations have been been proposed in these works.
Building on these works, many have proposed resource allocation in the cloud that is based on maximizing the utility of the system. 
For example, in \cite{menasce2006autonomic} and \cite{walsh2004utility}, the authors propose utility-function based approaches for resource allocation in \emph{autonomic computing systems}. 
In \cite{chase2001managing}, the authors use utility functions to allocate resources dynamically and save energy by consolidating load.
In \cite{minarolli2011utility}, the authors propose schemes to dynamically allocate computing resources to virtual machines (such as virtual operating systems) while minimizing operating costs and satisfying QoS constraints. 
This is done by expressing these two goals as a two-tier utility function. Another example is \cite{islam2012empirical}, 
where the authors find the optimal number and size of virtual machines to allocate CPU resources to applications via an automatic resource controller. 
This is done by using a constraint programming approach to maximize the utility accrued. 
Thus, using system utility maximization as the overall objective for designing resource allocation theme has been a recurring theme in the literature.
Our objective in this paper shares this feature with these works.

Pricing in cloud computing is closely related to the second feature, although to the best of our knowledge this idea has not been fully exploited in the literature. 
When cloud computing was first introduced in \cite{weinhardt2009cloud}, the authors claimed that the success of the cloud can only be obtained by developing adequate pricing schemes.  
Since then a variety of pricing schemes for the cloud have been proposed.
Here, we give a small sample of the several works 
that propose and analyze dynamic pricing schemes for various services in the cloud.
In \cite{li2011pricing}, the authors propose an iterative pricing algorithm that uses the historical pricing of resources and determines the final price based on availability of resources for the next round.
In \cite{rohitratana2012impact}, the authors analyze four dynamic pricing schemes and develope an agent-based simulation of a software market.
In \cite{mihailescu2010dynamic}, the authors propose a federated version of dynamic pricing  where the computing resources are being shared by multiple cloud service providers.
Finally, Amazon offers spot pricing, which is another form of dynamic pricing where the resources are priced at a lower rate than fixed pricing but with less guarantee of availability \cite{spot_pricing_survey}.
For further examples of pricing schemes, see \cite{al2013cloud}.

\section{Remarks}
\label{sec: remarks}

\begin{remark}
\label{rem: greedy_sparse}
Corresponding to any non-preemptive scheduling, there exists an implicitly defined priority ordering amongst the users based on their completion times.
Without loss of generality, suppose that this ordering is $1\leftarrow 2 \leftarrow 3 \leftarrow \cdots \leftarrow N$, that is, user $i$ is served along or before user $i+1$ for all $i\in [N-1]$. 
A simple greedy allocation $\tilde \x := (\tilde x_{i,t})_{i \in [N], t \in [T]}$ corresponding to this priority ordering would work as follows: 
Say the cloud has served the first $i-1$ users till time $t$. 
Now, if the number of machines remaining at time $t$ is greater than $J_i$, all of user $i$'s functions are allocated at time $t$. 
Otherwise, the cloud provider continues allocating resources at times $>t$ to user $i$ till her job is complete 
before attending the user $i+1$. 
This process continues till either all users or all of the $T$ tiers are served.

	Corresponding to the greedy allocation with a given ordering, when the cloud provider moves from tier $t$ to $t+1$, there can be at most one user with a partially satisfied job, for all $t \in [T]$. 
Hence, there are at most $T$ instances such that the users jobs are partially allocated, that is, 
$|\{ (i,t) : 0 < x_{i,t} < J_i \}| \leq T$.
Further, there are at most $N$ entries in the matrix $\x$ such that $x_{i,t} = J_i$.
Thus, the allocation matrix $(x_{i,t}), i \in [N], t\in [T]$ can have at most $(N+T)$ non-zero enties, i.e. it is $(N+T)$--sparse.
\end{remark}

\begin{remark}
\label{rem: tighter_bound}
	We observed in Lemma~\ref{lem: sparsity} that the matrix $\x^R$ of allocations is $(N+T)$-sparse. 
	Suppose in addition to it we know that the allocation $\x^R$ allocates partial resources to at most one user in any tier $t$ as is typical in a greedy allocation (see Remark~\ref{rem: greedy_sparse}).
	Then, in the proof of Theorem~\ref{thm:gap_bound} (see Appendix~\ref{sec: proofs}), we have $|S_t| \leq 1$, for all $t$ (where $S_t$ is the set of users that are allocated partial resources in tier $t$), and we can improve the bound to get,
	$$\hat V \geq \left(1 - \frac{\max_i J_i}{\min_t M_t}\right) V^R.$$
\end{remark}

\begin{remark}
\label{rem: NP_hardness}
We note that  
the problems SYS/SYS-ILP are NP-hard in general.
In \cite{du1990minimizing}, the authors consider a job scheduling problem for $N$ jobs, with execution times $\{J_1, \dots, J_N\}$, to be executed on a single machine, where each job has a corresponding due date $d_i$. 
The utility function for each job is assumed to decrease by its \emph{tardiness} defined as the delay in the completion of job $i$ from its due date $d_i$.
Namely, if $T_i$ is the completion time of job $i$, then let $D_i(T_i) := \max\{0, T_i - d_i\}$ be the tardiness.
The authors show that the problem of finding a schedule that minimizes the total tardiness is NP-hard in general (see \cite{du1990minimizing,lawler1977pseudopolynomial,lenstra1977complexity}).
Further, this is proved for the case when the parameters $J_i$ and $d_i$, for all $i$, take integer values.
We note that this is a special case of our problem.
To see this, recall that the parameters in our problem are $N,T,(U_{i,t})_{i \in [N], t \in [T]}, (M_t)_{t \in [T]}$.
Let $N$ be same as the number of different jobs in \cite{du1990minimizing}.
Let $T = \sum_{i=1}^N J_i$.
Think of $\tau_t = t$, for $t \in [T]$.
Thus, each service tier has a unit time interval.
Let the utility functions be $U_i(\tau) = D_i(T+1) - D_i(\tau)$.
Thus, if $T_i$ is the completion tier for player $i$, then her utility is given by $U_{i,{T_i}} = D_i(T + 1) - D_i(T_i),$ for all $i$. 
Let the resource capacities be $M_t = 1$ for all $t \in [T]$.
With this setting, we now observe that the problem of maximizing social welfare is equivalent to minimizing the total tardiness.
Given the result in \cite{du1990minimizing}, we get that our problem is NP-hard, too, in general.
\end{remark}

\begin{remark}
\label{rem: compare_wang}
	In \cite{wang2012datacenter}, the authors arrive at a problem formulation that is similar to SYS.
	They note that the condition \eqref{eq: sys_T_def} is not differentialble, thus making it hard to solve the problem SYS.
	They propose to approximate this condition by the following equation
	\begin{equation}
	\label{eq: useless_transform_approx}
		\hat T_i := \frac{1}{\beta} \log \left(\frac{1}{J_i} \sum_{t = 1}^T e^{\beta t} x_{i,t}\right).
	\end{equation}
	Note that as $\beta \to \infty$, 
	\[
		\hat T_i \to \max\{t \in [T]: x_{i,t} > 0 \}.
	\]
	If we assume that user $i$'s job is completed, i.e. $\sum_{t =1}^T x_{i,t} = J_i$, then $\hat T_i \to T_i$ as $\beta \to \infty$.
	The authors replace $U_{i, T_i}$ with ${U}_i(\hat T_i)$ and use Taylor series first order approximation to further simplify the optimization problem, which can then be solved analytically.
	Our approach here is very different from theirs.
	In the next section, we make an important observation regarding the structure of the optimal solution to SYS-LP (see lemma \ref{lem: sparsity}).
	This observation not only allows us to show that the gap between SYS-LP and SYS is small, but also explains why it is enough to consider our direct relaxation instead of the one like \eqref{eq: useless_transform_approx}.
	Besides, our direct relaxation is useful in decomposing the problem into a cloud problem and several user problems--one for each user-- giving rise to a natural dynamic pricing mechanism (see section \ref{sec:pricing_mechanism}).
	Moreover, in a forthcoming paper, we extend our problem formulation to account for uncertainties in the delay times.
	Our direct relaxation is particularly useful in this extension.
\end{remark}

\begin{remark}
\label{rem: optimal_soln_sparsity}
Consider a scenario where the $P_{it}$'s are sampled independently from non-atomic probability distributions over some finite intervals $[\overline{P_{i,t}}, \underline{P_{i,t}}]$ (where $\overline{P_{i,t}} < \underline{P_{i,t}}$), for all $i, t$.
(A probability distribution over the real numbers is non-atomic if the probability of any single real number occuring is zero.)
In such a scenario, we observe that the probability of the event that a non-trivial relationship amongst the variables $P_{i,t}$'s holds is zero.
Thus, for any choice of $T+1$ instances $(i, t_1, t_2)$ that satisfy \eqref{eq: mu_relation}, we get a non-trivial relationship amongst the variables $P_{i,t}$'s and the probability of this happening is zero.
Since there are finitely many choices for such $T+1$ instances $(i, t_1, t_2)$, we get that the probability of there being more than $T$ partial allocations is zero.
Thus, in a generic case in the sense above, for any optimal solution $x_{i,t}$ to SYS-LP, there can be at most $T$ instances where the users are allocated partial resources, i.e. $0 < x_{i,t} < J_i$. 
Moreover, there are at most $N$ instances where the users get their jobs completed, i.e. $x_{i,t} = J_i$. 
And hence, $x_{i,t}, i\in [N], t\in [T]$ is $(N+T)$-sparse (cf. Remark~\ref{rem: greedy_sparse}). 
\end{remark}

\section{Proofs}
\label{sec: proofs}

\begin{proof}[Proof of Lemma~\ref{lemma:ordering_sys}]
Let $\x$ be any optimal resource allocation to SYS and let $T_i$ the corresponding end times as defined in \eqref{eq: end_time_def}.
Without loss of generality, let the following ordering hold:
\begin{equation}
\label{eq: endtime_ord}
T_1 \leq T_2 \leq \dots \leq T_N.	
\end{equation}
As defined in Remark~\ref{rem: greedy_sparse}, let $\tilde \x$ be the greedy allocation corresponding to the ordering $1\leftarrow 2 \leftarrow 3 \leftarrow \cdots \leftarrow N$.
Let $\tilde T_i$ be the end times corresponding to the allocation $\tilde \x$.
By construction of the greedy allocation, we have
\[
	\tilde T_i = \min \left\{t \in [T] : \sum_{s = 1}^t M_t \geq \sum_{j = 1}^i J_j \right\}, \text{ for all } i \in [N].
\]
On the other hand, because of the ordering \eqref{eq: endtime_ord}, we have
\[
	T_i \geq \min \left\{t \in [T] : \sum_{s = 1}^t M_t \geq \sum_{j = 1}^i J_j \right\}, \text{ for all } i \in [N].
\]
Thus, $\tilde T_i \leq T_i$, for all $i \in [N]$.
Hence $\sum_{i} U_{i,{\tilde T_i}} \geq \sum_{i} U_{i,{T_i}}$ and, thus, $\tilde \x$ is also an optimal allocation.
Since $\tilde \x$ is a non-preemptive scheduling, we have the statement in the lemma.
\end{proof}

\begin{proof}[Proof of Lemma~\ref{lem: sparsity}]
We will first extend the SYS-LP problem by adding a dummy service tier $T+1$ with unlimited capacity $M_{T+1} = \infty$.
Let $U_{i, (T+1)} = 0$ for all users $i \in [N]$.
Let us call this problem SYS-LP-EXT.
For any feasible solution $\x$ to SYS-LP, we can construct a feasible solution $\tilde \x$ to SYS-LP-EXT by executing all the remaining functions in tier $T+1$, i.e. for all $i \in [N], t \in [T+1]$, let
\[
	\tilde x_{i,t} := \begin{cases}
		x_{i,t}, &\text{ if } i \in [N], t \in [T],\\
		J_i - \sum_{t \in [T]} x_{i,t},  &\text{ if } i \in [N], t = T+1.
	\end{cases}
\]
Similarly, we can construct a solution $\x$ to SYS-LP corresponding to any solution $\tilde \x$ to SYS-LP-EXT by restricting it to $i \in [N]$ and $t \in [T]$.
Since $U_{i, (T+1)} = 0$ for all users $i \in [N]$, we have that the objective values for SYS-LP and SYS-LP-EXT match for corresponding feasible solutions.
We will now show that there exists an optimal solution $\tilde x$ to SYS-LP-EXT such that $0 < \tilde x_{i,t} < J_i$ for at most $T$ elements.
Taking the corresponding solution for SYS-LP, we will get the optimal solution with the desired properties for SYS-LP.

Let $\tilde \x$ be an optimal solution to SYS-LP-EXT.
Without loss of generality, let us assume that all users get their jobs completed by the end of tier $T+1$, i.e. $\sum_{t} \tilde x_{i,t} = J_i$, for all $i$.
We have this because tier $T+1$ is assumed to have infinite capacity.
Consider the KKT conditions to problem SYS-LP-EXT similar to \eqref{sys_kkt1}, \eqref{sys_kkt2}, and \eqref{sys_kkt3}, with dual variables $\tilde \mu_{t}, t \in [T+1], \tilde \lambda_i, i \in [N]$.
Since $M_{i, (T+1)} = \infty$, we have $\tilde \mu_{T+1} = 0$.
If $\tilde x_{i,t} > 0$, then $\tilde \mu_t + \tilde \lambda_i = F_{i,t}$.
Suppose user $i$ is allocated partial resource in some tier $t_1 \in [T]$, i.e $0 < \tilde x_{i,t_1} < J_i$.
Then there exists a another tier $t_2 \in [T+1]$ such that $0 <\tilde x_{i, t_2} < J_i$.
We will then have
\[
	\tilde \mu_{t_1} - \tilde \mu_{t_2} = F_{i, t_1} - F_{i, t_2}.
\]
Corresponding to any instance $(i, t_1, t_2)$ consider the equation above.
If we have any $(T+1)$ such distinct instances, then we can eliminate the $\mu_t$ variables and get a non-trivial equality relationship between the variables $F_{i,t}$.
For each collection of $(T+1)$ distinct instances $(i, t_1, t_2)$, we get a non-trivial equality relationship between the variables $F_{i,t}$.
Suppose for the moment that the variables $F_{i,t}$ are such that they do not satisfy any of the equality relationships obtained from $(T+1)$ distinct instances $(i, t_1, t_2)$.
Then, we get that $0 < \tilde x_{i,t} < J_{i,t}$ for at most $T$ elements.
This would give us the required result.
In the rest of the proof, we will extend this argument to any variables $F_{i,t}$.

Consider a neighborhood $\cal N$ in the non-negative orthant of the $N \times (T+1)$-dimensional Euclidean space around the matrix $(F_{i,t})_{i\in[N],t \in [T+1]}$.
Consider the set of points $\cal P$ in this neighborhood such that they do not satisfy any of the equality relationships obtained from $(T+1)$ distinct instances $(i, t_1, t_2)$.
We claim that the set $\cal P$ is dense in the set set $\cal N$.
To see this, note that the set of points in $\cal{N}$ that satisfy any given non-trivial equality relationship is zero. 
Since there are finitely many such equality relationships that we need to consider, we get that the set of points in $\cal N$ that satisfy any of these non-trivial equality relationship is zero. 
Since $\cal P$ is the complement of this set, it is dense in $\cal N$.
As a result, we get that there is a sequence of points $(F_{i,t}^l)_{i\in[N],t \in [T+1]}$, $l \geq 1$, belonging to the set $\cal P$ and converging to $(F_{i,t})_{i\in[N],t \in [T+1]}$.
Since the problem SYS-LP-EXT has a continuous objective function and continuous contraint functions, we get that there exists a sequence of solutions $\tilde \x^l, l \geq 1$ such that $\tilde \x^l$ is an optimal solution to the SYS-LP-EXT problem with variables $(F_{i,t})_{i\in[N],t \in [T+1]}$ replaced by $(F_{i,t}^l)_{i\in[N],t \in [T+1]}$, and $\tilde \x^l$ is convergent.
Let it converge to $\tilde \x'$.
We note that $\tilde \x'$ is an optimal solution to SYS-LP-EXT (see \cite{milgrom2002envelope}).
Then we get that $\tilde \x^l$ has at most $T$ elements such that $0 < \tilde x_{i,t}^l < J_i$.
This implies that $\tilde \x'$ has at most $T$ elements such that $0 < \tilde x_{i,t}' < J_i$.
This completes the proof.
\end{proof}


\begin{proof}[Proof of Theorem~\ref{thm:gap_bound}]
Suppose the resource allocation matrix $\x^R \in \R^{N\times T}$ is such that there are at most $T$ elements such that $0 < x_{i,t} < J_i$.
We know that such an optimal solution exists from Lemma~\ref{lem: sparsity}.
In the optimal resource allocation $\x^R$ for SYS-LP, say a user $i$ is getting non-zero resources in slots $m$ and $n$ (and say $m < n$), i.e. $x_{i,m} > 0, x_{i,n} > 0$ and a user $j$ which is getting resources in slot $m$, i.e. $x_{j,m} > 0$. 
Then, by the optimality of $\x^R$, we have
\begin{equation}\label{relation_P}
F_{j,m} - F_{j,n} \geq F_{i,m} - F_{i,n}.
\end{equation}
We can easily prove the above by redistributing $\epsilon (>0)$ fraction of the job from user $j$ in slot $m$ to user $i$ in slot $m$ (and vice versa in slot $n$). But we know that this can only decrease the objective function in SYS-LP, that is
$$(F_{i,m} - F_{i,n})\epsilon - (F_{j,m} - F_{j,n})\epsilon \leq 0,$$
which proves Eq. \eqref{relation_P}. 

Now, we are ready to prove the theorem by giving an upper bound for the fraction
$$\frac{V^R - \hat V}{V^R},$$ 
where $\hat V$ is the objective of SYS-LP obtained by projecting solution of SYS-LP to integer constraints as described in Eq. \eqref{y_hat}. 
We bound the numerator and denominator at each time slot $t\in [T]$. 
Let $V^R_t$ and $\hat V_t$ be the corresponding utilities obtained only at tier $t\in [T]$, i.e.
\[
	V^R_t := \sum_{i = 1}^N u_{i,t} y^R_{i,t} \quad \text{ and } \quad \hat V_t := \sum_{i = 1}^N u_{i,t} \hat y_{i,t},
\]
where $y^R_{i,t}$ and $\hat y_{i,t}, \forall i, t,$ are as defined in \eqref{eq: y_R_def} and \eqref{y_hat}, respectively.
Let $T_i^R$ be the time at which the job of user $i$ are getting finished according to the resource allocation $\x^R$, that is, 
\[
	T_i^R := \min \{t \in [T] : \sum_{s = 1}^t x^R_{i,s} \geq J_i \}.
\]
Let $T_i^R = T+1$, if $\sum_{t=1}^T x_{i,s}^R < J_i$.
Thus, for tier $t$, we have
\begin{equation}
\frac{V^R_t - \hat V_t}{V^R_t} = \frac{\sum_{i\in S_t}(F_{i,t} - F_{i, T^R_i})x^R_{i,t}}{\sum_{j=1}^N F_{jt}x^R_{j,t}},
\end{equation}   
where $S_t$ is the set of users for which $0 < x^R_{i,t} < J_i$. Formally, $S_t := \{i\in [N], 0 < x^R_{i,t} < J_i\}$. 
We can further write
\begin{align}
\frac{V^R_t - \hat V_t}{V^R_t} &\leq \sum_{i\in S_t}\frac{(F_{i,t} - F_{i, T^R_i})x^R_{i,t}}{\sum_{j=1}^N (F_{j,t} - F_{j, T^R_i}) x^R_{j,t}} \nonumber\\
&\leq \sum_{i\in S_t}\frac{(F_{i,t} - F_{i, T^R_i})x^R_{i,t}}{(F_{i,t} - F_{i, T^R_i})\sum_{j=1}^N x^R_{j,t}},
\end{align}
where the last inequality uses Eq. \eqref{relation_P}. Now, since at time $t$, user $i$ is getting fractional resources, it implies that the system is operating at full capacity, that is $\sum_{j=1}^N x_{j,t} = M_t$. Hence, we get
\begin{align}
\frac{V^R_t - \hat V_t}{V^R_t} \leq \sum_{i\in S}\frac{x^R_{i,t}}{M_t} \leq \sum_{i\in S}\frac{\max_i(J_i)}{M_t},
\end{align}
where the last inequality uses the fact that $x^R_{i,t}\leq \max_i(J_i)$. Also, since there are at most $T$ instances where users are getting partial resources, $|S_t| \leq T$, and, we get
\begin{align}
\frac{V^R_t - \hat V_t}{V^R_t} \leq \frac{T\max_i(J_i)}{M_t} \leq \frac{T\max_i(J_i)}{\min_t M_t}.
\end{align}
Thus,
\begin{align}
V^R - \hat V = \sum_{t=1}^T (V^R_t - \hat V_t)\nonumber \\ 
&\leq \frac{T\max_i(J_i)}{\min_t M_t}\sum_{t=1}^T V^R_t = \frac{T\max_i(J_i)}{\min_t M_t} V^R,
\end{align}
Rearranging, we get
$$\hat V \geq \left(1 - \frac{T(\max_i J_i)}{\min_t M_t}\right) V^R,$$
which proves the desired result.
\end{proof}

\begin{proof}[Proof of Theorem~\ref{thm:decomposition}]
Let $\x$ be an optimal solution to SYS-LP and let $\bmu$ and $\lambda$ be the dual variables corresponding to this solution.
We know that these satisfy the KKT conditions \eqref{sys_kkt1}, \eqref{sys_kkt2}, and \eqref{sys_kkt3}.
Let $\q = \bmu$ and $m_{i,t} = x_{i,t}\mu_{i,t}$ for all $i,t$.

We will now show that $\m_i$ solves USER(i) for this $\q$.
Observe that $m_{i,t} = 0$ if $q_t = 0$ because of the way we have defined $\m$ here.
Thus, it is enough to look at the tiers for which $q_t \neq 0$.
Hence, without loss of generality, we will assume that $q_t \neq 0$ for all $t$. 
Consider the Lagrangian for the user problem USER(i),
\begin{equation}
L(\m_i, p_i) = \sum_{t=1}^T \frac{m_{i,t}}{q_t}\left(F_{i,t} - q_t\right) + p_i\left(J_i - \frac{m_{i,t}}{q_t}\right), \forall i \in [N],
\end{equation}
 where $p_i$ is the dual variable corresponding to the job size constraint \eqref{user_prob_cons} in the user problem. 
 The KKT conditions can, thus, be written as 
\begin{align}
\label{user_kkt1} \mu_t + p_i 
&\begin{cases}
= F_{i,t}, & \text{ if } m_{i,t} > 0\\
\geq F_{i,t}, & \text{ if } m_{i,t} = 0,
\end{cases}
\quad \quad \forall i, t,
\\
\label{user_kkt2}
\sum_{t=1}^{T} \frac{m_{i,t}}{q_t} 
&\begin{cases}
= J_i, & \text{ if } p_{i} > 0\\
\leq J_i, & \text{ if } p_{i} = 0,
\end{cases}
\quad \quad~~~~\forall t.
\end{align}
Taking $p_i = \lambda_i$, we get that these KKT conditions are satisfied.
Thus, $\m_i$ is an optimal solution to USER(i) with $\q = \bmu$.

Now we will show that $\m$ is an optimal solution for the CLOUD problem.
The Lagrangian for the CLOUD problem is given by
\begin{equation}\label{lag:cloud}
L(\m, \tilde \q) = \sum_{t=1}^T\sum_{i=1}^N m_{i,t}\log x_{i,t} + \sum_{t=1}^T q_t\left(M_t - \sum_{i=1}^N x_{i,t}\right),
\end{equation}
where $\tilde \q = (\tilde q_t, t\in [N])$ is the dual variable corresponding to the constraint \eqref{cloud_prob_cons} in the CLOUD problem. 
Let $\tilde q = \mu$.
If $m_{i,t} > 0$, then $x_{i,t} > 0$, and differentiating the Lagrangian with respect to $x_{i,t}$ we get
\[
	\frac{\partial L(\m, \q)}{\partial x_{i,t}} =  \frac{m_{i,t}}{x_{i,t}} - q_t = 0.
\]
If $m_{i,t} = 0$, then $\frac{\partial L(\m, \tilde \q)}{\partial x_{i,t}} \leq 0$, since $q_t \geq 0$.
Further, from \eqref{sys_kkt2}, we have
\begin{align}
\label{cloud_kkt2}
\sum_{i=1}^{N} x_{it} 
&\begin{cases}
= M_t, & \text{ if } q_{t} > 0\\
\leq M_t, & \text{ if } q_{t} = 0,
\end{cases}~~\forall t.
\end{align}
Thus, $\x$ and $\q$ satisfy the KKT conditions for the CLOUD problem. Hence $\x$ is an optimal solution to CLOUD with $\m$.

Thus we have showed statements (i) and (ii) in Theorem~\ref{thm:decomposition}. Statement (iii) follows from construction and statement (iv) follows from \eqref{sys_kkt2}.
%
We now prove the later assertion, namely, if we have an equilibrium solution $\x, \m, \q$ that satisfy (i), (ii), (iii), and (iv), then $\x$ solves the system problem SYS-LP.
To see this, let $\x, \m, \q$ be such an equilibrium solution.
Take $\bmu = \q$.
Since $\m_i$ is an optimal solution to USER(i) with $\q$, there exists dual a variable $p_i$ corresponding to the constraint \eqref{user_prob_cons}.
Take $\lambda_i = p_i$ for all $i$.
It is easy to check that $\x, \bmu, \bl$ satisfy the KKT consitions \eqref{sys_kkt1}, \eqref{sys_kkt2}, and \eqref{sys_kkt3}, and hence, form an optimal solution to SYS-LP.
This completes the proof of the theorem.
\end{proof}

\bibliographystyle{IEEEtran} 
\bibliography{bibli}

\end{document}